%% file: all_in.tex
\documentclass[journal]{IEEEtran}

\usepackage{indentfirst}
\usepackage{amsmath,amssymb,amsthm}
\usepackage{bm}
\usepackage{graphicx}
\usepackage{cite}
\usepackage{algpseudocode}
\usepackage{algorithm}
\usepackage{booktabs}
\usepackage[T1]{fontenc}
\usepackage{xcolor}
\usepackage{multirow}
\usepackage{subcaption}
\usepackage{hyperref}
\usepackage{fancyhdr}
\fancypagestyle{arxivnotice}{%
  \fancyhf{}%
  \fancyhead[C]{\footnotesize This work has been submitted to the IEEE JSAC for possible publication. Copyright may be transferred without notice, after which this version may no longer be accessible.}%
}
\newcommand{\tr}{\operatorname{tr}}

\newcommand{\Real}{\operatorname{Re}}
\newcommand{\rank}{\operatorname{rank}}
\newcommand{\vect}[1]{\mathbf{#1}}
\newcommand{\mat}[1]{\mathbf{#1}}
\newcommand{\FIM}{\mathbf{I}}
\newcommand{\JFIM}{\mathbf{J}}   % Joint 4x4 FIM
\newcommand{\CRB}{\mathrm{CRB}}
\usepackage{titlesec}
\titlespacing*{\section}{0pt}{8pt plus 2pt minus 2pt}{3pt plus 1pt minus 1pt}
\titlespacing*{\subsection}{0pt}{6pt plus 2pt minus 2pt}{2pt plus 1pt minus 1pt}
\titlespacing*{\subsubsection}{0pt}{4pt plus 1pt minus 1pt}{1pt plus 1pt minus 1pt}
\newtheorem{theorem}{Theorem}
\newtheorem{proposition}[theorem]{Proposition}

\newtheorem{lemma}[theorem]{Lemma}
\newtheorem{remark}{Remark}
\newtheorem{definition}{Definition}

\begin{document}
\bstctlcite{IEEEexample:BSTcontrol} %this one using after begin document
%\title{Joint Position--Velocity CRB Analysis for FIM-Augmented Cell-Free ISAC Systems}

%\title{Twin-in-the-Loop Closed-Loop Optimization of Position–Velocity Estimation in FIM-Augmented Cell-Free ISAC Systems}

\title{Twin-in-the-Loop Optimization and Fundamental Limits of Position--Velocity Estimation in Cell-Free ISAC Systems}

%optional one: Digital Twin–Driven Closed-Loop Optimization for Joint Position–Velocity Estimation in Cell-Free ISAC Systems
%Predictive Digital Twin–Driven Sensing: Closed-Loop Position–Velocity CRB Optimization in Cell-Free ISAC Networks
%\author{
%\IEEEauthorblockN{Changhao He\textsuperscript{1}， Xiaojuan %Zhang\textsuperscript{2}}

%\IEEEauthorblockA{\textsuperscript{1}King Abdullah University %of Science and Technology (KAUST), Saudi Arabia}

%\IEEEauthorblockA{\textsuperscript{2}Institute for Infocomm Research, Agency for Science, Technology and Research (A*STAR), Singapore}

% \IEEEauthorblockA{\textsuperscript{3}[Third Institution]}
%}

% \author{
% \IEEEauthorblockN{Changhao He, Xiaojuan Zhang,~\IEEEmembership{Senior Member,~IEEE}}
% \thanks{C. H. He is with King Abdullah University of Science and Technology (KAUST), Saudi Arabia (e-mail: changhao.he@kaust.edu.sa
% ). X. J. Zhang is with the Institute for Infocomm Research, Agency for Science, Technology and Research (A*STAR), Singapore (e-mail: xiaojuanzhang@ieee.org
% ). Corresponding author: X. J. Zhang.}}
\author{
\IEEEauthorblockN{Changhao He, Xiaojuan Zhang,~\IEEEmembership{Senior Member,~IEEE}, Geoffrey Ye Li,~\IEEEmembership{Fellow,~IEEE}}
\thanks{C. H. He is with King Abdullah University of Science and Technology, Saudi Arabia (e-mail: changhao.he@kaust.edu.sa).
X. J. Zhang is with the Institute for Infocomm Research, Agency for Science, Technology and Research, Singapore (e-mail: xzhang@a-star.edu.sg).
G. Y. Li is with the Department of Electrical and Electronic Engineering, Imperial College London, London, U.K. (e-mail: geoffrey.li@imperial.ac.uk).
Corresponding author: X. J. Zhang.}}

\maketitle
\thispagestyle{arxivnotice}

\input{00_abstract}
\input{01_introduction}
\input{02_system_model}
\input{03_velocity_crb}
\input{04_joint_crb}
\input{05_mismatch_robust}
\input{06_dt_loop}
\input{07_results}

\input{08_conclusion}

% ─── References ───
\bibliographystyle{IEEEtran}
\bibliography{IEEEabrv,references}
% ─── Appendices ───
% \appendices
% \input{sections/09_appendices}
\input{Appendix}
\end{document}

%% file: 00_abstract.tex
\begin{abstract}
    Digital twin (DT) networks require tight integration with wireless sensing, yet the fundamental limits of such coupling in cell-free integrated sensing and communication (ISAC) systems remain largely unexplored, particularly in the presence of fluid intelligent metasurfaces (FIM). This paper establishes a joint position--velocity Cramér–Rao bound (CRB) framework, operationalized through a twin-in-the-loop architecture. By leveraging a scatter-matrix decomposition of the velocity Fisher information, we show that single-base-station systems are inherently rank-deficient for two-dimensional velocity estimation, whereas cell-free deployments with multiple access-point pairs achieve full observability. The resulting CRB reveals a spatio-temporal decoupling: FIM shape optimization significantly improves position accuracy but does not affect the velocity CRB under isotropic waveforms while Doppler coupling asymmetrically enhances position estimation accuracy. Building on this analysis, we develop a closed-loop DT framework, deriving the critical mismatch angle in closed form and showing that angular diversity in cell-free systems mitigates DT prediction errors. We further characterize the optimal synchronization period and propose a confidence-aware scheduling strategy that reduces the DT update rate. Numerical results demonstrate substantial performance gains over single-base-station systems, with improvements attributed to angular diversity, Doppler–-position coupling, and FIM adaptation.
\end{abstract}

\begin{IEEEkeywords}
Cell-free network, fluid intelligent metasurface (FIM), integrated sensing and communication (ISAC), Cramér--Rao bound, velocity estimation, digital twin (DT), closed-loop optimization, angular diversity.
\end{IEEEkeywords}

%% file: 01_introduction.tex
% sections/01_introduction.tex
\section{Introduction}
\label{sec:intro}
\IEEEPARstart{D}{igital} twin (DT) technology, originally developed for cyber-physical systems, is emerging as a key enabler of \emph{closed-loop, application-aware wireless networks}~\cite{Hakiri2024DTSurvey,Zhang2024DNTSurvey}. Rather than passively mirroring the physical environment, next-generation DTs act as \emph{active agents in the control loop}: a predictive twin continuously estimates future network and environment states, the wireless system adapts its configuration accordingly, and real-time sensing feedback updates the twin, forming a tightly coupled predict--optimize--sense--update cycle~\cite{jiang2024dt_survey}. 
This paradigm shift positions wireless infrastructure as an adaptive substrate for scalable digital twin networks (DTNs), enabling real-time decision-making in dynamic environments.

This closed-loop capability is particularly compelling for integrated sensing and communication (ISAC)~\cite{liu2022isac,liu2020crb_dfrc}, where sensing measurements play a dual role: they not only extract target information but also continuously refine the twin’s representation of the propagation environment. As a result, sensing performance directly impacts the fidelity of the DT, which in turn governs future sensing and communication strategies, creating a tightly coupled feedback loop that demands joint system design.

Within this context, the cell-free architecture, where geographically distributed access points (APs) jointly serve users and sense targets under central coordination, offers a key structural advantage: \emph{angular diversity}~\cite{zhang2011csitr,bjornson2020cellfree_book}.
By observing targets from multiple spatially separated viewpoints, cell-free systems significantly enhance environmental observability and robustness compared to co-located arrays.
Recent works have demonstrated the potential of cell-free ISAC for multi-static sensing~\cite{demirhan2024cellfree_isac,behdad2024multistatic,buzzi2021cellfree}, and our prior work~\cite{he2026globecom} showed that cell-free angular diversity can amplify the morphing gain of reconfigurable arrays by more than an order of magnitude over single-BS deployments.

% Within this context, the cell-free architecture, where geographically distributed access points (APs) jointly serve users and sense targets under central coordination, offers a key structural advantage: \emph{angular diversity}~\cite{zhang2011csitr,bjornson2020cellfree_book}. 
% By observing targets from multiple spatially separated viewpoints, cell-free systems significantly enhance environmental observability and robustness compared to co-located arrays. 
% Recent works have demonstrated the potential of cell-free ISAC for multi-static sensing~\cite{demirhan2024cellfree_isac,behdad2024multistatic,buzzi2021cellfree}, and our prior work~\cite{he2026globecom} showed that such angular diversity substantially amplifies the gain of reconfigurable sensing architectures.

A second key enabler is the fluid intelligent metasurface (FIM)~\cite{wong2020fluid,zhu2024movable}, which allows continuous repositioning of antenna elements within a given aperture. 
This capability introduces a new degree of freedom for \emph{hardware-level adaptation}, enabling the wireless system to reshape its spatial sensing geometry in response to DT predictions. 
When integrated into a DT-driven ISAC framework, the predicted target state can directly guide FIM configuration, enabling \emph{predictive sensing} and proactive resource allocation. 
However, this tight coupling also introduces a fundamental challenge: inaccurate DT predictions may lead to suboptimal or even detrimental hardware configurations, which is different from the conventional fixed-array systems. 
Understanding and mitigating this closed-loop vulnerability is critical for reliable DT-driven operation.

\subsection{Related Work}

\textbf{CRB analysis for ISAC systems.}
The Cram\'{e}r--Rao bound (CRB) has become the standard sensing metric in ISAC system design.
Early works established the CRB--rate tradeoff for dual-function radar-communication waveforms~\cite{liu2020crb_dfrc}, and subsequent studies extended the analysis to MIMO-OFDM settings~\cite{liu2022isac}.
Recently, the fundamental capacity--CRB Pareto tradeoff in OFDM ISAC systems has been characterized in~\cite{huang2026barankin}, further tightening the connection between communication and sensing performance.
However, these analyses universally assume co-located arrays and focus on angle-of-arrival or delay estimation; \emph{velocity estimation from distributed Doppler measurements} and the impact of \emph{reconfigurable array morphology} on estimation bounds remain unexplored.

\textbf{Distributed sensing and velocity estimation.}
Distributed MIMO radar has long exploited multiple bistatic links for velocity resolution~\cite{li2008mimo_radar,haimovich2008mimo}, and several studies have analyzed the observability conditions and CRB for multi-static configurations~\cite{godrich2010target}.
In the cell-free ISAC context, existing works have primarily addressed localization accuracy through distributed angle or delay measurements~\cite{demirhan2024cellfree_isac,behdad2024multistatic}.
Joint position--velocity estimation, which combines angle-based localization with Doppler-based velocity reconstruction, has not been studied for cell-free systems with reconfigurable arrays.
The fundamental question of how many distributed links are required for full 2D velocity observability, and how reconfigurable element positions affect this observability, is unanswered.

\textbf{Digital twin for wireless networks.}
DT-assisted wireless systems have been applied to beam management~\cite{alkhateeb2023dt_beam}, channel estimation~\cite{jiang2023dt_beam_pred}, sensing-aided communication~\cite{Ding2024DTISAC}, and edge caching with reinforcement learning~\cite{Zhang2024DREC}.
A critical design question in DT architectures is \emph{synchronization}: how often and how much physical-layer data must be fed back to maintain twin fidelity. In~\cite{Yu2025DTSync} DT synchronization has been formulated as a joint resource-management problem but considered only communication-layer metrics, while a DT has been employed with an extended Kalman filter in ~\cite{Ding2024DTISAC} for vehicular tracking using the posterior CRB as a sensing constraint. Twin fidelity has been identified in ~\cite{bariah2024dt_frontier} as a first-class design metric, yet existing works do not characterize the closed-loop interplay between DT prediction quality, reconfigurable hardware adaptation, and sensing accuracy.
In particular, the question of when DT prediction errors cause reconfigurable hardware to \emph{actively degrade} performance and how to guarantee robustness through architectural choices has not been theoretically analyzed.

\subsection{Motivation and Scope}

The above discussion reveals a fundamental gap in the design of DT-driven cell-free ISAC systems: the lack of a unified framework that links \emph{estimation-theoretic limits} with \emph{closed-loop DT control}. 
This motivates the following questions:
%\begin{itemize}
%\item[\emph{Q1:}] What is the minimum number of distributed links required for full 2D velocity observability, and how does reconfigurable element positioning influence the estimation bounds?
%\item[\emph{Q2:}] Does FIM shape optimization improve both position and velocity estimation, or is there a fundamental decoupling between these two tasks?
%\item[\emph{Q3:}] Under what conditions do DT prediction errors cause optimized FIM configurations to degrade sensing performance, and how should synchronization policies adapt accordingly?
%\end{itemize}

%This motivates the following key questions:
\begin{itemize}
\item[\emph{Q1:}] What are the necessary and sufficient conditions for full 2D velocity observability in cell-free ISAC systems, and how does distributed link geometry govern the resulting estimation bounds?

\item[\emph{Q2:}] What is the fundamental relationship between spatial FIM shaping and temporal Doppler information in joint position--velocity estimation, and can these domains be optimized independently?

% \item[\emph{Q3:}] How do digital twin (DT) prediction errors impact the optimality of FIM-driven configurations, and what synchronization and control strategies are required to ensure robust performance?
\item[\emph{Q3:}] How do DT prediction errors interact with reconfigurable hardware optimization, and when does DT-driven adaptation become detrimental?
\item[\emph{Q4:}] What is the optimal DT synchronization rate and scheduling policy that balances FIM morphing gain against sensing overhead in a closed loop?
\end{itemize}

This paper addresses these issues by establishing a unified framework that bridges Fisher information analysis with DT-driven system design.

\subsection{Contributions}

The main contributions of this paper are as follows:
\begin{enumerate}
% \item \textbf{Velocity observability in cell-free ISAC (Q1, Theorems~\ref{thm:scatter}--\ref{thm:rank}).} 
% A scatter-matrix decomposition of the velocity Fisher information yields a waveform- and configuration-agnostic observability limit: single-BS systems are rank-$1$ for 2D velocity, while $M \geq 2$ spatially separated AP pairs are both necessary and sufficient for full-rank reconstruction, with RMSE scaling as $M^{-1/2}$ under fixed total power.

\item \textbf{Velocity observability in cell-free ISAC (Q1, Theorems~\ref{thm:scatter}--\ref{thm:rank}).} 
A scatter-matrix decomposition of the velocity Fisher information yields a waveform- and configuration-agnostic observability limit: single-BS systems are rank-$1$ for 2D velocity while two or more spatially separated access-point (AP) pairs are both necessary and sufficient for full-rank reconstruction, with the RMSE decreasing as the inverse square root of the number of APs under a fixed total power budget.

\item \textbf{Spatio-temporal orthogonality (Q2, Theorem~\ref{thm:ortho}, Propositions~\ref{prop:joint_fim}--\ref{prop:asym}).} 
From the $4{\times}4$ joint position--velocity CRB, we prove that FIM shape optimization strictly improves position accuracy but has no effect on velocity CRB under isotropic waveforms. The coupling is asymmetric: Doppler enhances position accuracy by up to $10.9$~dB at target speeds above $20$~m/s, however, position information has negligible effect on velocity bounds.

\item \textbf{Robustness of DT-driven adaptation (Q3, Theorem~\ref{thm:mismatch}).} 
We derive a closed-form critical mismatch angle %$\Delta\theta_{\mathrm{crit}} = |\hat{\theta}|$ 
beyond which DT-optimized FIM underperforms a fixed ULA, and prove that cell-free angular diversity preserves aggregate gain even when individual links invert, providing architectural robustness against DT prediction errors.

\item \textbf{Twin-in-the-loop synchronization (Q4, Propositions~\ref{prop:sync}--\ref{prop:horizon}, Algorithm~\ref{alg:policy}).} 
We introduce \emph{twin staleness} linking DT prediction age to FIM gain degradation, derive the optimal synchronization period %$T^{\star}$
in closed form, and identify a discrete phase transition of the prediction horizon at $M=2$. The resulting confidence-aware scheduler reduces the DT update rate by $2\times$ to $3\times$ over angle-only prediction. %A progressive ablation attributes the $23$~dB total gain over single-BS baselines to angular diversity ($10.2$~dB), Doppler--position fusion ($10.9$~dB), and FIM shape morphing ($1.9$~dB).
\end{enumerate}

\subsection{Paper Organization}
The remainder of this paper is organized as follows.
Section~\ref{sec:system} describes the FIM-augmented cell-free ISAC system model, including the signal model, FIM antenna architecture, and DT interface.
Section~\ref{sec:velocity_crb} derives the velocity CRB through scatter-matrix decomposition and establishes rank and scaling results.
Section~\ref{sec:joint_crb} develops the joint position--velocity FIM and proves the spatio-temporal orthogonality principle.
Section~\ref{sec:mismatch} analyzes FIM mismatch robustness under DT prediction errors.
Section~\ref{sec:dt_loop} presents the twin-in-the-loop closed-loop framework, including synchronization period optimization and adaptive scheduling.
Section~\ref{sec:results} provides extensive numerical validation.
Section~\ref{sec:conclusion} concludes the paper.

\emph{Notation:}
Boldface uppercase and lowercase letters denote matrices and vectors, respectively.
$(\cdot)^T$, $(\cdot)^H$, and $\tr(\cdot)$ denote transpose, Hermitian (conjugate transpose), and trace.
$\mat{I}_N$ is the $N \times N$ identity matrix, and $\|\cdot\|$ denotes the Euclidean norm.
$\Real\{\cdot\}$ takes the real part.
$\mathbb{E}[\cdot]$ denotes expectation, and $\succeq$ denotes positive semidefinite ordering.

%% file: 02_system_model.tex
\section{System Model}
\label{sec:system}
This section formalizes the FIM-augmented cell-free ISAC system and its 
digital-twin closed loop, and defines the joint position--velocity 
parameter space used throughout the paper.
\subsection{DT-Enabled System Architecture}
\label{sec:dt_arch}

\begin{figure}[t]
\centering
\includegraphics[width=\linewidth]{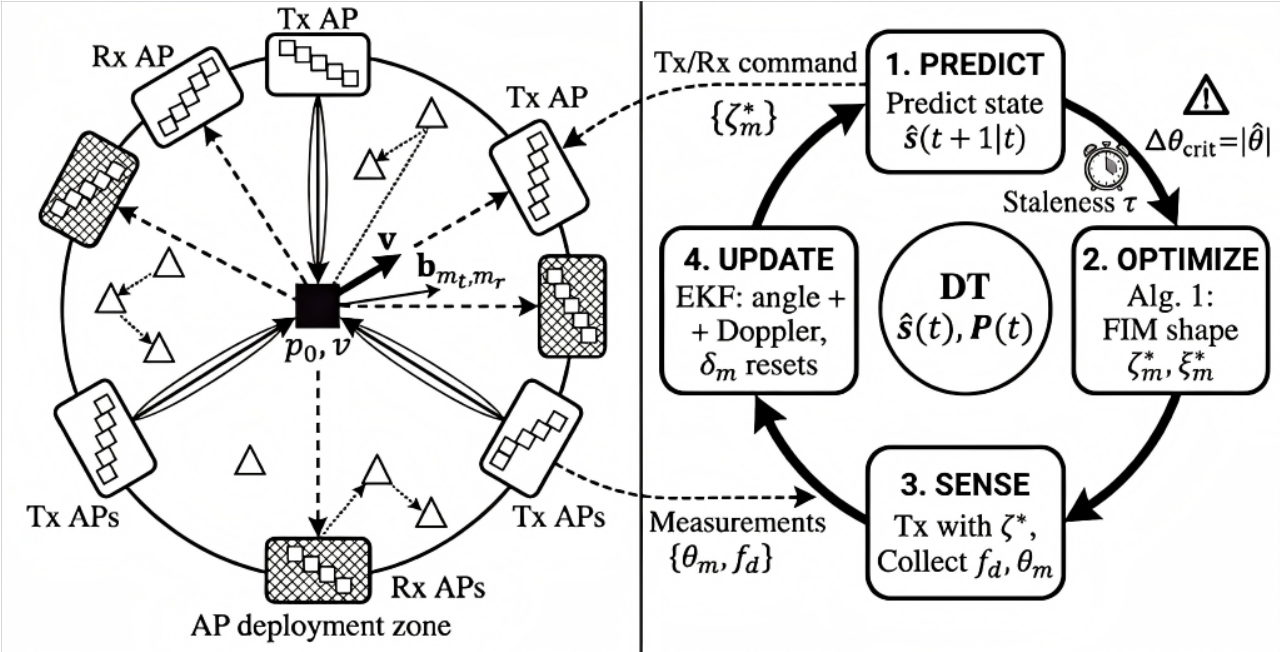}
\caption{FIM-augmented cell-free ISAC architecture}
\label{fig:sysmodel}
\end{figure}

We consider the cell-free ISAC system illustrated in Fig.~\ref{fig:sysmodel}. The physical layer comprises $M_t$ transmitting access points (Tx~APs) and $M_r$ receiving APs (Rx~APs) distributed over a geographical area and coordinated by a central processing unit (CPU). Each AP is equipped with a fluid intelligent metasurface (FIM) forming a uniform linear array of $N_t$ (respectively $N_r$) elements, whose element positions can be dynamically reconfigured within a bounded region, i.e., $|\zeta_{m_t,n}| \le \tilde{\zeta}$ for Tx~APs and $|\xi_{m_r,n}| \le \tilde{\zeta}$ for Rx~APs~\cite{zhu2024modeling}, where $\zeta_{m_t,n}$ (resp.\ $\xi_{m_r,n}$) denotes the displacement of the $n$-th element at Tx~AP~$m_t$ (resp.\ Rx~AP~$m_r$) from its nominal half-wavelength-spaced position. We collect the Tx and Rx displacement vectors as $\boldsymbol{\zeta}_{m_t} = [\zeta_{m_t,1},\ldots,\zeta_{m_t,N_t}]^\top \in \mathbb{R}^{N_t}$ and $\boldsymbol{\xi}_{m_r} = [\xi_{m_r,1},\ldots,\xi_{m_r,N_r}]^\top \in \mathbb{R}^{N_r}$, respectively. The APs jointly sense a point target at position $\mathbf{p}_0 = [x_0, y_0]^\top$ moving with velocity $\mathbf{v} = [v_x, v_y]^\top$ while simultaneously serving $K_u$ communication users, operating at carrier frequency $f_c$ with wavelength $\lambda = c/f_c$.

This model extends the angle-only framework of~\cite{he2026globecom} along three axes: expansion to joint position--velocity estimation ($\mathbb{R}^4$), inclusion of bistatic Doppler, and integration with a DT closed loop.

\subsubsection{Twin-in-the-Loop Operation}
As depicted in the right panel of Fig.~\ref{fig:sysmodel}, a digital twin (DT) module maintains a predictive state estimate and drives per-AP FIM reconfiguration through a closed loop. The DT holds a virtual replica of the target's kinematic state $\mathbf{s}_k = [\mathbf{p}_k^\top, \mathbf{v}_k^\top]^\top \in \mathbb{R}^4$ at each sensing epoch $k$, and operates in four stages:
\begin{enumerate}
  \item \textbf{Predict:} Given the posterior, $\hat{\mathbf{s}}_{k-1}$, the DT propagates the state through a kinematic model to obtain a predicted state $\bar{\mathbf{s}}_k$ and associated predicted angles $\{\bar{\theta}_m\}$, where $m$ indexes all Tx and Rx APs.
  \item \textbf{Optimize:} Using $\{\bar{\theta}_m\}$, the CPU solves a joint beamforming and FIM shape optimization problem to configure each AP's element displacements $\{\boldsymbol{\zeta}_{m_t}, \boldsymbol{\xi}_{m_r}\}$ for the upcoming coherent processing interval (CPI).
  \item \textbf{Sense:} The physical layer transmits ISAC waveforms and collects bistatic echoes across all Tx--Rx AP pairs, yielding angle and Doppler measurements $\mathbf{z}_k$.
  \item \textbf{Update:} An extended Kalman filter (EKF) fuses, $\mathbf{z}_k$, with the prior,  $\bar{\mathbf{s}}_k$ to produce the posterior,  $\hat{\mathbf{s}}_k$, which is fed back to the DT for the next cycle.
\end{enumerate}

\subsubsection{State-Space Formulation}
\label{sec:ssm}
The DT maintains a discrete-time state-space representation
\begin{align}
\mathbf{s}_k   &= \mathbf{F}\,\mathbf{s}_{k-1} + \mathbf{w}_{k-1},
  & \mathbf{w}_{k-1} &\sim \mathcal{N}(\mathbf{0}, \mathbf{Q}),
  \label{eq:process}\\
\mathbf{z}_k   &= h(\mathbf{s}_k) + \mathbf{v}_k,
  & \mathbf{v}_k &\sim \mathcal{N}(\mathbf{0}, \mathbf{R}_k),
  \label{eq:measurement}
\end{align}
where $\mathbf{F}$ is the constant-velocity state-transition matrix and $\mathbf{Q}$ the process-noise covariance obtained by sampling the continuous white-noise acceleration model at interval $\Delta t$:
% \begin{equation}
% \mathbf{F} = \begin{bmatrix} \mathbf{I}_2 & T\,\mathbf{I}_2 \\ \mathbf{0}_{2\times 2} & \mathbf{I}_2 \end{bmatrix},
% \quad
% \mathbf{Q} = \sigma_a^2\!\begin{bmatrix} \tfrac{T^4}{4}\mathbf{I}_2 & \tfrac{T^3}{2}\mathbf{I}_2 \\ \tfrac{T^3}{2}\mathbf{I}_2 & T^2\mathbf{I}_2 \end{bmatrix},
% \label{eq:F_Q}
% \end{equation}
\begin{equation}
\mathbf{F} = \begin{bmatrix} \mathbf{I}_2 & \Delta t\,\mathbf{I}_2 \\ \mathbf{0}_{2\times 2} & \mathbf{I}_2 \end{bmatrix},
\quad
\mathbf{Q} = q\begin{bmatrix} \tfrac{\Delta t^3}{3}\mathbf{I}_2 & \tfrac{\Delta t^2}{2}\mathbf{I}_2 \\ \tfrac{\Delta t^2}{2}\mathbf{I}_2 & \Delta t\,\mathbf{I}_2 \end{bmatrix},
\label{eq:F_Q}
\end{equation}
with $\Delta t$ the DT update interval, $q$ the process-noise power spectral density, and the measurement vector, $\mathbf{z}_k$, stacking all bearing angles $\{\theta_{m_t}, \theta_{m_r}\}$ and bistatic Doppler frequencies $\{f_{d,m_t,m_r}\}$, obtained from the array response and Doppler models defined later.
%derived from~\eqref{eq:array_response} and~\eqref{eq:doppler_freq}. 
% The measurement covariance, $\mathbf{R}_k$, is determined by the per-CPI CRB and thus depends on the FIM configuration $\{\boldsymbol{\zeta}_{m_t}, \boldsymbol{\xi}_{m_r}\}_k$, closing the loop between estimation bounds and DT update policy, a coupling analyzed in subsequent sections.
The measurement covariance, $\mathbf{R}_k$, is determined by the per-CPI CRB and thus depends on the FIM configuration $\{\boldsymbol{\zeta}_{m_t}, \boldsymbol{\xi}_{m_r}\}_k$, thereby closing the loop between estimation bounds and the DT update policy; this coupling is analyzed in subsequent sections.
%Sections~\ref{sec:mismatch} and~\ref{sec:dt_loop}.

\subsubsection{Assumptions}
%Throughout this paper, we adopt the following standard assumptions:
%\begin{itemize}\setlength\itemsep{1pt}\setlength\parsep{0pt}
%  \item[(A1)] Single point target in the far field.
%  \item[(A2)] AP positions stationary and known.
%  \item[(A3)] Stop-and-go target motion within each CPI.
%  \item[(A4)] Known path loss; RCS as unknown deterministic nuisance, jointly estimated with $\mathbf{s}_k$.
%  \item[(A5)] Isotropic $\mathbf{R}_{m_t}^{(\mathrm{s})}$ unless stated; required for Theorem~\ref{thm:ortho}.
%\end{itemize}

% Throughout this paper, we adopt the following standard assumptions:
% \begin{itemize}\setlength\itemsep{1pt}\setlength\parsep{0pt}
%   \item[(A1)] A single point target located in the far field.
%   \item[(A2)] AP positions are stationary and perfectly known.
%   \item[(A3)] Stop-and-go target motion within each CPI.
%   \item[(A4)] Path loss is known, while the radar cross section (RCS) is modeled as an unknown deterministic nuisance parameter jointly estimated with $\mathbf{s}_k$.
%   \item[(A5)] The transmit covariance, $\mathbf{R}_{m_t}^{(\mathrm{s})}$, is isotropic unless otherwise specified; this assumption is required for the orthogonality results developed later.
% \end{itemize}
Throughout this paper, we adopt the following standard assumptions:
\begin{list}{}{%
  \setlength{\leftmargin}{2.5em}
  \setlength{\labelwidth}{2em}
  \setlength{\labelsep}{0.7em}
  \setlength{\itemsep}{1.5pt}
  \setlength{\parsep}{0pt}
  \setlength{\topsep}{2.5pt}
}
  \item[(A1)] A single point target located in the far field.
  \item[(A2)] AP positions are stationary and perfectly known.
  \item[(A3)] Stop-and-go target motion within each CPI.
  \item[(A4)] Path loss is known, while the radar cross section (RCS) is modeled as an unknown deterministic nuisance parameter jointly estimated with $\mathbf{s}_k$.
  \item[(A5)] The transmit covariance, $\mathbf{R}_{m_t}^{(\mathrm{s})}$, is isotropic unless otherwise specified; this assumption is required for the orthogonality results developed later.
\end{list}
%----------------------------------------------------------------------
\subsection{Physical-Layer Signal Model}
\label{sec:signal}
%----------------------------------------------------------------------

The distributed AP deployment induces per-AP observation angles $\theta_{m_t} = \angle(\mathbf{p}_0 - \mathbf{p}_{m_t}^{(\mathrm{Tx})})$ and $\theta_{m_r} = \angle(\mathbf{p}_0 - \mathbf{p}_{m_r}^{(\mathrm{Rx})})$, where $\mathbf{p}_{m_t}^{(\mathrm{Tx})}$ and $\mathbf{p}_{m_r}^{(\mathrm{Rx})}$ are the known AP positions and $\angle(\cdot)$ is measured from the array broadside.

Each FIM-equipped AP has an array response parameterized by both the observation angle and the element-displacement vector. For Tx~AP~$m_t$ with half-wavelength nominal spacing, the $n$-th element response ($n = 1,\ldots,N_t$) is~\cite{zhu2024modeling}
\begin{equation}
[\mathbf{a}_{m_t}]_n = \frac{1}{\sqrt{N_t}}\exp\!\Big( j\big[(n\!-\!1)\pi\sin\theta_{m_t} + \tfrac{2\pi}{\lambda}\zeta_{m_t,n}\cos\theta_{m_t}\big]\Big),
\label{eq:array_response}
\end{equation}
where the second phase term captures FIM's physical shape deformation; setting $\boldsymbol{\zeta}_{m_t} = \mathbf{0}$ recovers the standard ULA. The Rx array response $\mathbf{a}_{m_r}$ follows the same structure with $m_t$, $N_t$, and $\boldsymbol{\zeta}_{m_t}$ replaced by $m_r$, $N_r$, and $\boldsymbol{\xi}_{m_r}$.

The angular sensitivity of the array is captured by the phase-rate vector
\begin{equation}
[\mathbf{d}_m]_n = (n\!-\!1)\pi\cos\theta_m - \tfrac{2\pi}{\lambda}\zeta_{m,n}\sin\theta_m,
\label{eq:phase_rate}
\end{equation}
so that $\dot{\mathbf{a}}_m \triangleq \partial\mathbf{a}_m/\partial\theta_m = j\,\mathrm{diag}(\mathbf{d}_m)\,\mathbf{a}_m$. Following~\cite{he2026globecom}, we define the per-AP effective aperture
\begin{equation}
\gamma_m(\boldsymbol{\zeta}_m) \triangleq \frac{1}{N_m}\|\mathbf{d}_m\|^2,\qquad N_m = \begin{cases}N_t, & m \in \mathcal{M}_t,\\ N_r, & m \in \mathcal{M}_r,\end{cases}
\label{eq:gamma}
\end{equation}
which governs the per-AP angle-estimation accuracy and is the key quantity enhanced by FIM morphing.\footnote{The definition~\eqref{eq:gamma} uses the raw phase-rate norm rather than the mean-centered $\|\mathbf{d}_m - \bar{d}_m\mathbf{1}\|^2$; consequently, a uniform element shift (constant $\zeta_{m,n} = c$) contributes to $\gamma_m$ through the $\sin\theta_m$ factor in~\eqref{eq:phase_rate}, consistent with the physical fact that the effect vanishes identically at broadside ($\theta_m = 0$).}

The bistatic sensing channel from Tx~AP~$m_t$ to Rx~AP~$m_r$ via the target is
\begin{equation}
\mathbf{G}_{m_t,m_r} = \alpha_{m_t,m_r}\,\mathbf{a}_{m_r}(\theta_{m_r},\boldsymbol{\xi}_{m_r})\,\mathbf{a}_{m_t}^H(\theta_{m_t},\boldsymbol{\zeta}_{m_t}),
\label{eq:channel}
\end{equation}
where $\alpha_{m_t,m_r} \in \mathbb{C}$ combines path loss and radar cross section (assumption~(A4)).
The received sensing signal at Rx~AP~$m_r$ aggregated over $L$ snapshots is $\mathbf{Y}_{m_r} = \sum_{m_t=1}^{M_t} \mathbf{G}_{m_t,m_r}\,\mathbf{X}_{m_t} + \mathbf{N}_{m_r},$ where $\mathbf{X}_{m_t} \in \mathbb{C}^{N_t \times L}$ is the transmitted waveform matrix and $\mathbf{N}_{m_r} \sim \mathcal{CN}(\mathbf{0}, \sigma^2\mathbf{I}_{N_r})$ is additive white noise. The waveform cross-correlation $\mathbf{R}_{m_0,m_1} \triangleq \mathbf{X}_{m_0}\mathbf{X}_{m_1}^H$ governs cross-AP information coupling in the sensing FIM~\cite{he2026globecom}.

%----------------------------------------------------------------------
\subsection{Communication Model}
\label{sec:comm}
%----------------------------------------------------------------------

The $M_t$ Tx~APs simultaneously serve $K_u$ single-antenna downlink users. Let $\mathbf{h}_{k,m_t} \in \mathbb{C}^{N_t}$ denote the channel from Tx~AP~$m_t$ to user~$k$, and $\mathbf{w}_{k,m_t} \in \mathbb{C}^{N_t}$ the corresponding beamforming vector, collected into $\mathbf{W}_{m_t} = [\mathbf{w}_{1,m_t}, \ldots, \mathbf{w}_{K_u,m_t}] \in \mathbb{C}^{N_t \times K_u}$.
The aggregate transmit signal at Tx~AP~$m_t$ comprises both communication and sensing components: $\mathbf{x}_{m_t} = \sum_{k=1}^{K_u}\mathbf{w}_{k,m_t}\,s_k + \mathbf{x}_{m_t}^{(\mathrm{s})},$ where $s_k$ is the data symbol for user~$k$ with $\mathbb{E}[|s_k|^2] = 1$, and $\mathbf{x}_{m_t}^{(\mathrm{s})}$ is the dedicated sensing waveform with covariance $\mathbf{R}_{m_t}^{(\mathrm{s})} \triangleq \mathbb{E}\big[\mathbf{x}_{m_t}^{(\mathrm{s})} \mathbf{x}_{m_t}^{(\mathrm{s})H}\big]$. The per-AP total transmit power is bounded by $\|\mathbf{W}_{m_t}\|_F^2 + \mathrm{tr}(\mathbf{R}_{m_t}^{(\mathrm{s})}) \leq P_{m_t}$.

The received signal at user~$k$ is
\begin{align}
y_k &= \sum_{m_t=1}^{M_t} \mathbf{h}_{k,m_t}^H \mathbf{w}_{k,m_t}\, s_k
     + \sum_{j \neq k} \sum_{m_t=1}^{M_t} \mathbf{h}_{k,m_t}^H \mathbf{w}_{j,m_t}\, s_j \notag\\
    &\quad + \sum_{m_t=1}^{M_t} \mathbf{h}_{k,m_t}^H \mathbf{x}_{m_t}^{(\mathrm{s})} + n_k,
\label{eq:rx_comm}
\end{align}
where $n_k \sim \mathcal{CN}(0, \sigma_c^2)$ is the receiver noise, and the three signal terms correspond to the desired signal, multi-user interference, and sensing waveform interference, respectively. Defining
$\mathsf{S}_k \triangleq \big|\textstyle\sum_{m_t}\mathbf{h}_{k,m_t}^H\mathbf{w}_{k,m_t}\big|^2$ and
$\mathsf{I}_k \triangleq \sum_{j\neq k}\big|\sum_{m_t}\mathbf{h}_{k,m_t}^H\mathbf{w}_{j,m_t}\big|^2 + \sum_{m_t}\mathbf{h}_{k,m_t}^H\mathbf{R}_{m_t}^{(\mathrm{s})}\mathbf{h}_{k,m_t}$,
the downlink SINR of user~$k$ is $\mathrm{SINR}_k = \mathsf{S}_k / (\mathsf{I}_k + \sigma_c^2)$.

The snapshot-level channel model is extended next to multiple snapshots to incorporate Doppler information from target mobility.

%----------------------------------------------------------------------
\subsection{Doppler Observation Model}
\label{sec:doppler}
%----------------------------------------------------------------------

Target mobility induces a time-varying phase across the $L$ snapshots within each CPI of duration $T_{\mathrm{cpi}} = L T_s$. For the bistatic pair $(m_t, m_r)$, the received signal at snapshot $l$ is
\begin{equation}
\mathbf{y}_l^{(m_t,m_r)} = \alpha_{m_t,m_r}\,e^{j2\pi f_{d,m_t,m_r}\, l T_s}\,\mathbf{a}_{m_r}\,\mathbf{a}_{m_t}^H\,\mathbf{x}_{m_t,l} + \mathbf{n}_l,
\label{eq:doppler_signal}
\end{equation}
where $l = 0,\ldots,L-1$, $T_s$ is the pulse repetition interval, $\mathbf{x}_{m_t,l} \in \mathbb{C}^{N_t}$ is the transmit signal at snapshot $l$ given by the $l$-th column of $\mathbf{X}_{m_t}$, and $\mathbf{n}_l \sim \mathcal{CN}(\mathbf{0}, \sigma^2 \mathbf{I}_{N_r})$. Equation~\eqref{eq:doppler_signal} thus makes explicit the inter-snapshot Doppler phase suppressed in the aggregate snapshot-level model.

The bistatic Doppler frequency encodes the target velocity through the \emph{bistatic bisector}:
\begin{equation}
f_{d,m_t,m_r} = \frac{1}{\lambda}\big(\mathbf{e}_{m_t} + \mathbf{e}_{m_r}\big)^\top \mathbf{v}
             = \frac{1}{\lambda}\,\mathbf{b}_{m_t,m_r}^\top \mathbf{v},
\label{eq:doppler_freq}
\end{equation}
where $\mathbf{e}_m = (\mathbf{p}_0 - \mathbf{p}_m)/\|\mathbf{p}_0 - \mathbf{p}_m\|$ is the unit line-of-sight vector and $\mathbf{b}_{m_t,m_r} \triangleq \mathbf{e}_{m_t} + \mathbf{e}_{m_r}$ is the bistatic bisector vector. In the monostatic special case $\mathbf{e}_{m_t} = \mathbf{e}_{m_r} = \mathbf{e}$, the bisector reduces to $\mathbf{b} = 2\mathbf{e}$, so that only the radial velocity component is observable; the geometric diversity of distinct bisectors across distributed AP pairs is what enables full 2D velocity reconstruction (cf.\ Theorem~\ref{thm:rank}).

\begin{remark}[Spatial vs.\ Temporal Information Channels]
\label{rmk:two_channels}
The signal model in~\eqref{eq:doppler_signal} features two information channels for sensing:
(i) the \emph{spatial channel}, where angular information is encoded in the array steering vectors, $\mathbf{a}_m(\theta_m, \boldsymbol{\zeta}_m)$, and is controllable by FIM shape $\boldsymbol{\zeta}_m$; and
(ii) the \emph{temporal channel}, where velocity information is encoded in the inter-snapshot phase progression, $e^{j2\pi f_d\, lT_s}$, and is determined solely by the bistatic geometry, $\{\mathbf{b}_{m_t,m_r}\}$.
Under isotropic sensing waveforms (assumption~(A5)), these channels are decoupled (Theorem~\ref{thm:ortho}): FIM morphing strictly improves the position CRB while having zero effect on the velocity CRB.
\end{remark}

%----------------------------------------------------------------------
\subsection{Estimation Parameter Space}
\label{sec:param}
%----------------------------------------------------------------------

By jointly exploiting angular and Doppler observations, the system estimates the target kinematic state, $\boldsymbol{\psi} = [\mathbf{p}_0^\top,\, \mathbf{v}^\top]^\top \in \mathbb{R}^4,$
where $\mathbf{p}_0$ and $\mathbf{v}$ denote the position and velocity.%respectively.

Geometrically, $\mathbf{p}_0$ is inferred via triangulation from the $M_t + M_r$ bearing angles while $\mathbf{v}$ is reconstructed from the $M_t M_r$ bistatic Doppler measurements in~\eqref{eq:doppler_freq}. Although these two modalities originate from distinct spatial and temporal channels, they are not fully independent: the Doppler frequency depends on the bistatic bisector $\mathbf{b}_{m_t,m_r}$, which is itself a function of the target position. Consequently, $\partial f_{d,m_t,m_r}/\partial \mathbf{p}_0 \neq \mathbf{0}$ introduces a cross-information coupling between position and velocity and plays a central role in the joint CRB analysis (Section~\ref{sec:joint_crb}).

For the parameter vector $\boldsymbol{\psi}$, we denote the Fisher information matrix by $\JFIM(\boldsymbol{\psi}) \in \mathbb{R}^{4 \times 4}$ and the Cram\'{e}r--Rao bound matrix by $\CRB(\boldsymbol{\psi}) = \JFIM^{-1}(\boldsymbol{\psi})$. The positional and velocity sub-CRBs are $\CRB_{\mathbf{p}} \triangleq \mathrm{tr}\!\big([\CRB(\boldsymbol{\psi})]_{1:2,1:2}\big)$ and $\CRB_{\mathbf{v}} \triangleq \mathrm{tr}\!\big([\CRB(\boldsymbol{\psi})]_{3:4,3:4}\big)$, respectively, with the corresponding RMSEs defined as $\sqrt{\CRB_{\mathbf{p}}}$ and $\sqrt{\CRB_{\mathbf{v}}}$.

%% file: 03_velocity_crb.tex
\section{Velocity CRB Analysis}
\label{sec:velocity_crb}

This section characterizes the fundamental limits of velocity estimation from Doppler observations.
We derive the velocity Fisher information matrix (FIM) and establish three key results:
(i) a scatter decomposition revealing the geometric structure of Doppler information;
(ii) a rank transition quantifying the capability gap between single-BS and cell-free architectures;
(iii) closed-form RMSE scaling laws describing how velocity accuracy improves with the number of APs.

%----------------------------------------------------------------------
\subsection{Scatter Decomposition of the Velocity FIM}
%----------------------------------------------------------------------

Applying the Slepian--Bangs formula to the Doppler signal model~\eqref{eq:doppler_signal} and aggregating over all bistatic pairs yields the velocity FIM (see Appendix~\ref{app:vel_fim}):
\begin{equation}
\FIM_{\vect{vv}}
= \frac{1}{\lambda^2}
\sum_{m_t=1}^{M_t} \sum_{m_r=1}^{M_r}
\eta^2_{m_t,m_r}\,
\vect{b}_{m_t,m_r}\vect{b}_{m_t,m_r}^\top,
\label{eq:vel_fim_total}
\end{equation}
where $\eta^2_{m_t,m_r} \triangleq \frac{8\pi^2 L}{3} T_\mathrm{cpi}^2 \cdot \mathrm{SNR}_{m_t,m_r}$ is the per-pair Doppler SNR factor, $T_\mathrm{cpi} = L T_s$ is the coherent processing interval, and $\vect{b}_{m_t,m_r} = \vect{e}_{m_t} + \vect{e}_{m_r}$ is the bistatic bisector.

\begin{theorem}[Scatter Decomposition]
\label{thm:scatter}
The velocity FIM admits the decomposition
\begin{equation}
\FIM_{\vect{vv}}
= \frac{1}{\lambda^2}
\bigl(
\mat{S}_{\mathrm{Tx}} + \mat{S}_{\mathrm{Rx}}
+ \mat{C} + \mat{C}^\top
\bigr),
\label{eq:scatter_decomp}
\end{equation}
where $\mat{S}_{\mathrm{Tx}} = \sum_{m_t,m_r} \eta^2_{m_t,m_r}\, \vect{e}_{m_t}\vect{e}_{m_t}^\top$, $\mat{S}_{\mathrm{Rx}} = \sum_{m_t,m_r} \eta^2_{m_t,m_r}\, \vect{e}_{m_r}\vect{e}_{m_r}^\top$, and $\mat{C} = \sum_{m_t,m_r} \eta^2_{m_t,m_r}\, \vect{e}_{m_t}\vect{e}_{m_r}^\top$.
\end{theorem}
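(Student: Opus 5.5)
The argument starts from the aggregated velocity Fisher information \eqref{eq:vel_fim_total}, which we take as given from the Slepian--Bangs derivation in Appendix~\ref{app:vel_fim}. Theorem~\ref{thm:scatter} is then purely algebraic. The plan is to substitute the definition of the bistatic bisector, $\vect{b}_{m_t,m_r} = \vect{e}_{m_t} + \vect{e}_{m_r}$ from \eqref{eq:doppler_freq}, into each rank-one term $\vect{b}_{m_t,m_r}\vect{b}_{m_t,m_r}^\top$. Expanding the outer product bilinearly gives
\begin{equation*}
\vect{b}_{m_t,m_r}\vect{b}_{m_t,m_r}^\top
= \vect{e}_{m_t}\vect{e}_{m_t}^\top + \vect{e}_{m_r}\vect{e}_{m_r}^\top
+ \vect{e}_{m_t}\vect{e}_{m_r}^\top + \vect{e}_{m_r}\vect{e}_{m_t}^\top .
\end{equation*}
We then multiply by the scalar weight $\eta^2_{m_t,m_r}$ and sum over all pairs $(m_t,m_r)$. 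By linearity of the double sum, the four resulting sums are, in order, $\mat{S}_{\mathrm{Tx}}$, $\mat{S}_{\mathrm{Rx}}$, $\mat{C}$, and $\sum_{m_t,m_r}\eta^2_{m_t,m_r}\vect{e}_{m_r}\vect{e}_{m_t}^\top$.

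The only point needing care is recognizing this fourth sum as $\mat{C}^\top$. Since each $\eta^2_{m_t,m_r}$ is a real scalar and transposition is linear, $(\vect{e}_{m_t}\vect{e}_{m_r}^\top)^\top = \vect{e}_{m_r}\vect{e}_{m_t}^\top$ holds term by term. Reinstating the common factor $1/\lambda^2$ then yields \eqref{eq:scatter_decomp}.

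We will also note that the weights are pair-indexed, $\eta^2_{m_t,m_r}$, rather than factorizing over Tx and Rx. For this reason $\mat{S}_{\mathrm{Tx}}$ and $\mat{S}_{\mathrm{Rx}}$ are kept as double sums and not collapsed to per-AP marginal weights. In addition, $\mat{S}_{\mathrm{Tx}}$ and $\mat{S}_{\mathrm{Rx}}$ are symmetric positive semidefinite, and $\mat{C}+\mat{C}^\top$ is symmetric but generally indefinite. This interpretation is useful for the subsequent rank result but is not needed for the identity itself.

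No step is a genuine obstacle, since the derivation is a direct expansion. The substantive content lies in \eqref{eq:vel_fim_total} itself, which we take from the appendix. If we wished to be self-contained, we would verify two facts: that $\partial f_{d,m_t,m_r}/\partial\vect{v} = \vect{b}_{m_t,m_r}/\lambda$, and that the Doppler-frequency Fisher information from $L$ snapshots contributes the stated scalar $\eta^2_{m_t,m_r}$. The $\tfrac{8\pi^2 L}{3}T_{\mathrm{cpi}}^2$ factor in $\eta^2_{m_t,m_r}$ comes from $\sum_l (lT_s)^2 \approx L T_{\mathrm{cpi}}^2/3$, after centering the phase and nuisance-eliminating the unknown $\alpha_{m_t,m_r}$. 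Combining these two facts via the chain rule gives \eqref{eq:vel_fim_total}.
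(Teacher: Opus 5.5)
Your proof is correct and is the paper's argument: substitute $\vect{b}_{m_t,m_r}=\vect{e}_{m_t}+\vect{e}_{m_r}$, expand the outer product, distribute the $\eta^2_{m_t,m_r}$-weighted double sum, and identify the fourth sum as $\mat{C}^\top$ term by term. One small slip in your optional aside, which does not affect the identity (it holds for any real weights): centering the phase, i.e.\ eliminating the unknown complex $\alpha_{m_t,m_r}$, turns $\sum_l (lT_s)^2\approx L T_{\mathrm{cpi}}^2/3$ into $\sum_l (l-\bar{l})^2T_s^2\approx L T_{\mathrm{cpi}}^2/12$, so the $\tfrac{8\pi^2 L}{3}T_{\mathrm{cpi}}^2$ factor actually comes from the uncentered sum used in Appendix~\ref{app:vel_fim}.
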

\begin{IEEEproof}
Expanding $\vect{b}\vect{b}^\top = \vect{e}_{m_t}\vect{e}_{m_t}^\top + \vect{e}_{m_r}\vect{e}_{m_r}^\top + \vect{e}_{m_t}\vect{e}_{m_r}^\top + \vect{e}_{m_r}\vect{e}_{m_t}^\top$ in~\eqref{eq:vel_fim_total}, distributing the weighted sum yields~\eqref{eq:scatter_decomp}.
\end{IEEEproof}

The decomposition separates Doppler information into transmit-side, receive-side, and cross-coupling contributions. A key consequence appears in Theorem~\ref{thm:scaling}: under isotropic AP layouts, the cross-coupling term $\mat{C}$ vanishes, reducing the FIM to $\lambda^{-2}(\mat{S}_\mathrm{Tx}+\mat{S}_\mathrm{Rx})$ and enabling closed-form RMSE expressions. For APs uniformly distributed on a circle \emph{with target at the geometric center}, symmetry ensures $\kappa = 1$; for \emph{off-center targets} in our simulations with regular polygon layouts ($M = 3, 4, 5$), $\kappa$ ranges from 1.0 to 1.4 (cf.\ Section~\ref{sec:results}).

%----------------------------------------------------------------------
\subsection{Rank Transition: From 1D to 2D Velocity}
%----------------------------------------------------------------------

The velocity FIM in~\eqref{eq:vel_fim_total} is a sum of rank-1 matrices $\vect{b}_{m_t,m_r}\vect{b}_{m_t,m_r}^\top$, each contributing information only along its bisector direction.
Whether the full 2D velocity, $\vect{v}$, is identifiable therefore hinges on whether the collection of bisectors $\{\vect{b}_{m_t,m_r}\}$ spans $\mathbb{R}^2$, equivalently, whether $\FIM_{\vect{vv}}$ is full-rank.
This is the fundamental capability gap between single-BS and cell-free architectures.

\begin{theorem}[Rank Transition]
\label{thm:rank}
Consider the velocity FIM in~\eqref{eq:vel_fim_total} with $\eta^2_{m_t,m_r} > 0$ for all pairs.
\begin{enumerate}
\item[\emph{(i)}] \textbf{Single-BS:} If all APs are co-located ($\vect{e}_{m_t} = \vect{e}_{m_r} = \vect{e}$), then $\rank(\FIM_{\vect{vv}}) = 1$. Only the radial velocity is estimable.
\item[\emph{(ii)}] \textbf{Cell-free:} If at least two APs subtend distinct angles to the target, i.e., $\theta_m \neq \theta_{m'} \pmod{\pi}$ for some pair $(m, m')$, then $\rank(\FIM_{\vect{vv}}) = 2$. Both components of $\vect{v}$ are estimable.
\end{enumerate}
The isotropic assumption is not restrictive: for generic AP layouts, the same $M^{-1}$ and $M^{-1/2}$ scaling laws hold up to a constant condition-number factor $\kappa(\FIM_{\vect{vv}}) \in [1.2, 1.4]$ measured in our cell-free configurations (Fig.~\ref{fig:rank}, Remark~\ref{rem:cond}).
\end{theorem}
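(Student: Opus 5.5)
Both parts rest on one structural observation. Equation \eqref{eq:vel_fim_total} writes $\FIM_{\vect{vv}}$ as a nonnegative combination of rank-one PSD matrices $\vect{b}_{m_t,m_r}\vect{b}_{m_t,m_r}^\top$. For any $\vect{u}\in\mathbb{R}^2$ we have $\vect{u}^\top\FIM_{\vect{vv}}\vect{u}=\lambda^{-2}\sum\eta^2_{m_t,m_r}(\vect{b}_{m_t,m_r}^\top\vect{u})^2$. Since every $\eta^2_{m_t,m_r}>0$, this vanishes iff $\vect{u}\perp\vect{b}_{m_t,m_r}$ for all pairs. Hence $\ker\FIM_{\vect{vv}}=\mathrm{span}\{\vect{b}_{m_t,m_r}\}^\perp$, and $\rank(\FIM_{\vect{vv}})=\dim\mathrm{span}\{\vect{b}_{m_t,m_r}\}$. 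The theorem therefore reduces to computing the dimension of the span of the bisectors.

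\emph{Part (i).} If all APs are co-located, every bisector equals $\vect{b}=2\vect{e}$. Then $\FIM_{\vect{vv}}=\frac{4}{\lambda^2}\bigl(\sum_{m_t,m_r}\eta^2_{m_t,m_r}\bigr)\vect{e}\vect{e}^\top$. This has rank exactly one because $\|\vect{e}\|=1$ and the weight sum is positive. Its range is $\mathrm{span}\{\vect{e}\}$, so only the radial component $\vect{e}^\top\vect{v}$ is estimable, and the tangential direction $\vect{e}^\perp$ lies in the kernel. Equivalently, in Theorem~\ref{thm:scatter} one has $\mat{S}_{\mathrm{Tx}}=\mat{S}_{\mathrm{Rx}}=\mat{C}$, all proportional to $\vect{e}\vect{e}^\top$.

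\emph{Part (ii).} I will exhibit two linearly independent bisectors. The clean case is the cell-free layout with co-located Tx/Rx AP pairs. There the pair $(m,m)$ has $\vect{b}_{m,m}=2\vect{e}_m$, with $\vect{e}_m=[\sin\theta_m,\cos\theta_m]^\top$ up to the fixed broadside convention. Two such vectors are parallel iff $\theta_m\equiv\theta_{m'}\pmod{\pi}$, so the hypothesis gives two independent bisectors and hence rank $2$.

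For a general bistatic layout I would use a difference trick. Because the span of the bisectors is a subspace, it contains $\vect{b}_{m_t,m_r}-\vect{b}_{m_t,m_r'}=\vect{e}_{m_r}-\vect{e}_{m_r'}$, and likewise differences of Tx line-of-sight vectors. Two further facts follow from the bisectors $\vect{b}_{m_t,m_r}$ themselves. First, a bisector is nonzero unless $\vect{e}_{m_t}=-\vect{e}_{m_r}$, i.e., the target lies on the baseline. Second, $\vect{e}_{m_r}-\vect{e}_{m_r'}$ is orthogonal to $\vect{e}_{m_r}+\vect{e}_{m_r'}$ for unit vectors. Combining these, a nonzero difference together with a nonparallel sum yields two independent vectors.

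The main obstacle is precisely this general case. The hypothesis ``two APs subtend distinct angles'' is not literally sufficient in full generality. With $M_t=M_r=1$ there is a single bisector, so the rank is $1$ even when $\theta_{m_t}\neq\theta_{m_r}$; a forward-scatter/baseline geometry can also make a bisector vanish. I would therefore state and use the intended reading of ``at least two AP pairs''. That is, either two monostatic-like pairs with distinct angles, or some Tx (or Rx) seen together with two Rx (or Tx) APs at distinct angles, with the target off every baseline. Under this reading the difference-and-sum argument closes the proof.

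The closing sentence on $M^{-1}$ and $M^{-1/2}$ scaling with $\kappa\in[1.2,1.4]$ is an empirical remark tied to Fig.~\ref{fig:rank} and Theorem~\ref{thm:scaling}, not a claim of this proof. I would treat it only by noting that $\tr(\FIM_{\vect{vv}}^{-1})$ is bounded between the isotropic value and $\kappa$ times it, via eigenvalue bounds.
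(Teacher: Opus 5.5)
Your reduction $\rank(\FIM_{\vect{vv}})=\dim\mathrm{span}\{\vect{b}_{m_t,m_r}\}$ and your part~(i) are the paper's argument; your kernel computation only makes the first step explicit. In part~(ii) you depart from the paper, and you are right to. As literally stated, (ii) is false, and the paper's proof breaks at a specific step. The paper takes a unit $\vect{u}$ orthogonal to all bisectors and correctly deduces $\vect{u}^\top\vect{e}_{m_t}=c$ for every Tx AP and $\vect{u}^\top\vect{e}_{m_r}=-c$ for every Rx AP. It then claims every Tx direction is parallel to $\vect{u}^\perp$. That holds only if $c=0$. For $0<|c|<1$, a fixed projection only confines $\vect{e}_{m_t}$ to the two mirror images $c\vect{u}\pm\sqrt{1-c^2}\,\vect{u}^\perp$, and $|c|=1$ makes every bisector zero. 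Your single bistatic pair is a $c\neq 0$ configuration: it has rank $1$ even though two distinct angles are present. Completing the case analysis shows that, when no bisector vanishes, the rank is $1$ exactly when all Tx APs share one direction and all Rx APs share one direction. So your restricted reading is a correct repair: either two co-located pairs at distinct angles, or one Tx seen with two Rx at distinct angles and the target off the baselines. Your co-located-pair case is complete as written.

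The one step you assert rather than prove is in the general bistatic case. The orthogonality $(\vect{e}_{m_r}-\vect{e}_{m_r'})\perp(\vect{e}_{m_r}+\vect{e}_{m_r'})$ does not do the work, because $\vect{e}_{m_r}+\vect{e}_{m_r'}$ need not lie in the span of the bisectors. What you actually need is that $\vect{b}_{m_t,m_r}$ is not parallel to $\vect{e}_{m_r}-\vect{e}_{m_r'}$, and the unit norm gives this:
\begin{itemize}
\item Suppose $\vect{b}_{m_t,m_r}$ and $\vect{b}_{m_t,m_r'}$ were dependent. Their span would then be the line through the nonzero difference $\vect{e}_{m_r}-\vect{e}_{m_r'}$, so $\vect{e}_{m_t}=(\alpha-1)\vect{e}_{m_r}-\alpha\vect{e}_{m_r'}$ for some scalar $\alpha$.
\item Let $\rho=\vect{e}_{m_r}^\top\vect{e}_{m_r'}<1$. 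Imposing $\|\vect{e}_{m_t}\|=1$ gives $2\alpha(\alpha-1)(1-\rho)=0$.
\item Hence $\vect{e}_{m_t}=-\vect{e}_{m_r}$ or $\vect{e}_{m_t}=-\vect{e}_{m_r'}$. That is a vanishing bisector, which your off-baseline assumption excludes.
\end{itemize}
Equivalently, you can run the paper's $\vect{u}$-argument correctly. The vectors $\vect{e}_{m_r}$ and $\vect{e}_{m_r'}$ must be mirror images about the $\vect{u}$-axis, and any unit $\vect{e}_{m_t}$ with the matching projection is the negative of one of them.

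With this step added, your proposal proves a true version of (ii), which the paper's proof does not. You treat the closing $\kappa\in[1.2,1.4]$ sentence as an empirical remark rather than part of the proof; that is consistent with the paper, which gives no argument for it either.
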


\begin{IEEEproof}
Since $\FIM_{\vect{vv}} = \lambda^{-2}\sum_{m_t,m_r}\eta^2_{m_t,m_r}\vect{b}_{m_t,m_r}\vect{b}_{m_t,m_r}^\top$ is a positive-weighted sum of rank-1 outer products, $\rank(\FIM_{\vect{vv}}) = \dim\mathrm{span}\{\vect{b}_{m_t,m_r}\}$. (i) When all APs are co-located, $\vect{b}_{m_t,m_r} = 2\vect{e}$ for all pairs, giving a 1D span.
(ii) Suppose the bisectors lie in a 1D subspace; then there exists a unit vector $\vect{u}$ orthogonal to this subspace with $\vect{u}^\top(\vect{e}_{m_t}+\vect{e}_{m_r}) = 0$ for all pairs. Fixing $m_r$, this forces $\vect{u}^\top\vect{e}_{m_t}$ to be constant across all Tx APs, implying that every Tx direction is parallel to a common line $\vect{u}^\perp$. Symmetrically for Rx. Contrapositively, if any two APs subtend distinct angles $\theta_m \neq \theta_{m'}\!\!\pmod{\pi}$, the bisector span is 2D.
\end{IEEEproof}

This theorem formalizes the \emph{capability gap} between co-located and distributed architectures: angular diversity directly translates into velocity observability.

\begin{remark}[Condition number and isotropy]
\label{rem:cond}
Rank-2 guarantees that both velocity components are estimable, but does not ensure they are estimated with equal accuracy.
The condition number, $\kappa(\FIM_{\vect{vv}}) \triangleq \lambda_{\max}/\lambda_{\min}$, quantifies this anisotropy: $\kappa = 1$ corresponds to isotropic estimation while $\kappa \to \infty$ recovers the single-BS case where tangential velocity is unobservable.
For APs uniformly distributed on a circle, symmetry ensures $\kappa = 1$; with regular polygon layouts in our simulation, $\kappa$ ranges from $1.0$ to $1.4$ (cf.\ Section~\ref{sec:results}), confirming near-isotropic estimation.
\end{remark}

%----------------------------------------------------------------------
\subsection{Scaling Laws}
%----------------------------------------------------------------------

From theorem~\ref{thm:rank}, cell-free architectures achieve full-rank velocity estimation, but does not quantify how the estimation accuracy improves as more APs are deployed.
We now derive closed-form RMSE scaling laws under idealized isotropic geometry, which serve as performance benchmarks for practical deployments.
We distinguish two practically relevant power regimes:
(A)~fixed per-link SNR, where each additional AP brings independent resources (e.g., dedicated backhaul power), and
(B)~fixed total power, where a constant power budget is split among all APs.

To enable closed-form analysis, we first establish a geometric identity for isotropic AP layouts.

% \begin{lemma}[Isotropic Geometry Identity]
% \label{lem:isotropic}
% Let $\{\vect{e}_m\}_{m=1}^{M}$ with $\vect{e}_m = [\cos\phi_m, \sin\phi_m]^\top$ and $\phi_m = 2\pi m/M$, $M \geq 3$. Then
% \begin{equation}
% \sum_{m=1}^{M} \vect{e}_m \vect{e}_m^\top = \frac{M}{2}\, \mat{I}_2.
% \end{equation}
% \end{lemma}
\begin{lemma}[Isotropic Geometry Identities]
\label{lem:isotropic}
Let $\vect{e}_m = [\cos\phi_m, \sin\phi_m]^\top$ with $\phi_m = 2\pi m/M$ and $M \geq 3$. Then $\sum_{m=1}^{M}\vect{e}_m = \vect{0}$ and $\sum_{m=1}^{M} \vect{e}_m\vect{e}_m^\top = \frac{M}{2}\, \mat{I}_2$.
\end{lemma}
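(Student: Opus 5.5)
The plan is to reduce both identities to sums of complex roots of unity. Set $\omega = e^{j2\pi/M}$, so that $e^{j\phi_m} = \omega^m$. Because $M$ is a positive integer and $\omega \neq 1$, the geometric series gives
\begin{equation}
\sum_{m=1}^{M}\omega^{km} = 0 \quad \text{whenever } M \nmid k .
\end{equation}
For the first identity, I note that $\sum_m \vect{e}_m$ has components $\Real\{\sum_m \omega^m\}$ and $\operatorname{Im}\{\sum_m \omega^m\}$. Taking $k=1$ in the display above, this sum is zero for every $M \geq 2$.

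For the second identity, I write out the entries of $\vect{e}_m\vect{e}_m^\top$ using double-angle formulas:
\begin{equation}
\vect{e}_m\vect{e}_m^\top = \frac{1}{2}\mat{I}_2 + \frac{1}{2}\begin{bmatrix}\cos 2\phi_m & \sin 2\phi_m\\ \sin 2\phi_m & -\cos 2\phi_m\end{bmatrix}.
\end{equation}
Summing over $m$, the identity part contributes $\frac{M}{2}\mat{I}_2$. The traceless part contributes the real and imaginary parts of $\sum_m \omega^{2m}$, which vanish by the display with $k=2$, provided $M \nmid 2$.

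The only point needing care is exactly this condition $M \nmid 2$. It fails for $M=1$ and $M=2$: at $M=2$ the two unit vectors are antipodal, so the scatter matrix has rank one. This is precisely why the statement assumes $M \geq 3$, and I will point out that this hypothesis is used only in the second identity. Everything else is a direct computation, so I expect no substantive obstacle.
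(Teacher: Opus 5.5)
Your proposal is correct and takes essentially the same route as the paper. Both proofs reduce the centroid identity to $\sum_m e^{j\phi_m}=0$ by geometric summation, and reduce the scatter identity to the vanishing of $\sum_m \cos 2\phi_m$ and $\sum_m \sin 2\phi_m$ via $\sum_m e^{j4\pi m/M}=0$ for $M\geq 3$; your split into $\frac{1}{2}\mat{I}_2$ plus a traceless part is a compact repackaging of the paper's separate diagonal and off-diagonal computation.
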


\begin{IEEEproof}
The centroid identity $\sum_m \vect{e}_m = \vect{0}$ follows from $\sum_m e^{j\phi_m} = 0$ by geometric summation for $M \geq 2$. For the outer product, the diagonals satisfy $\sum_m\cos^2\phi_m = \sum_m\sin^2\phi_m = M/2$ via $\sum_m\cos 2\phi_m = \mathrm{Re}\{\sum_m e^{j4\pi m/M}\} = 0$ for $M \geq 3$, and the off-diagonal is $\sum_m\cos\phi_m\sin\phi_m = \tfrac{1}{2}\sum_m\sin 2\phi_m = 0$ by the same argument.
\end{IEEEproof}

% \begin{theorem}[Velocity RMSE Scaling Laws]
% \label{thm:scaling}
% Consider a symmetric cell-free deployment with $M_t = M_r = M$ where both Tx and Rx APs are isotropically deployed (Lemma~\ref{lem:isotropic}) and all per-pair SNR weights are equal within each power regime.
% \begin{enumerate}
% \item[\emph{(A)}] \textbf{Fixed per-link SNR} ($\eta^2_{m_t,m_r} = \eta_0^2$):
% \begin{equation}
% \mathrm{RMSE}_{\vect{v}}^{(A)}
% = \frac{\lambda\sqrt{2}}{M\,\eta_0}
% \;\propto\; M^{-1}.
% \label{eq:scaling_A}
% \end{equation}
% \item[\emph{(B)}] \textbf{Fixed total power} ($\eta^2_{m_t,m_r} = \eta_0^2/M$):
% \begin{equation}
% \mathrm{RMSE}_{\vect{v}}^{(B)}
% = \frac{\lambda\sqrt{2}}{\sqrt{M}\,\eta_0}
% \;\propto\; M^{-1/2}.
% \label{eq:scaling_B}
% \end{equation}
% \end{enumerate}
% \end{theorem}
\begin{theorem}[Velocity RMSE Scaling Laws]
\label{thm:scaling}
Consider a symmetric cell-free deployment with $M_t = M_r = M$, where both Tx and Rx APs are isotropically deployed (Lemma~\ref{lem:isotropic}) with equal per-pair SNR weights. Under \emph{(A)} fixed per-link SNR $\eta^2_{m_t,m_r} = \eta_0^2$, or \emph{(B)} fixed total power $\eta^2_{m_t,m_r} = \eta_0^2/M$,
\begin{equation}
\mathrm{RMSE}_{\vect{v}} = 
\begin{cases}
\dfrac{\lambda\sqrt{2}}{M\,\eta_0} \;\propto\; M^{-1}, & \text{Case~A},\\[6pt]
\dfrac{\lambda\sqrt{2}}{\sqrt{M}\,\eta_0} \;\propto\; M^{-1/2}, & \text{Case~B}.
\end{cases}
\label{eq:scaling}
\end{equation}
\end{theorem}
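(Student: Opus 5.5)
The plan is to start from the scatter decomposition of Theorem~\ref{thm:scatter} and use Lemma~\ref{lem:isotropic} to evaluate each of its terms in closed form. Under equal per-pair weights $\eta^2_{m_t,m_r}=\eta^2$, the double sums factorize. The RMSE is then read off as the square root of the trace of the inverse of a scaled identity matrix.

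\textbf{Step 1 (the diagonal scatter terms).} For the transmit-side term, the summand does not depend on $m_r$, so
\[
\mat{S}_{\mathrm{Tx}}=\eta^2\sum_{m_t}\sum_{m_r}\vect{e}_{m_t}\vect{e}_{m_t}^\top=\eta^2 M\sum_{m_t}\vect{e}_{m_t}\vect{e}_{m_t}^\top .
\]
By the second identity of Lemma~\ref{lem:isotropic}, this equals $\eta^2 M\cdot\tfrac{M}{2}\mat{I}_2$. The same argument gives $\mat{S}_{\mathrm{Rx}}=\tfrac{\eta^2M^2}{2}\mat{I}_2$.

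\textbf{Step 2 (the cross term).} Because the weights are equal, the cross term is an outer product of two sums:
\[
\mat{C}=\eta^2\Big(\sum_{m_t}\vect{e}_{m_t}\Big)\Big(\sum_{m_r}\vect{e}_{m_r}\Big)^\top .
\]
This vanishes by the centroid identity of Lemma~\ref{lem:isotropic}. One vanishing factor already suffices, so the Tx and Rx layouts need not share angles; each only has to be isotropic.

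\textbf{Step 3 (FIM, CRB and RMSE).} Combining Steps 1 and 2 gives
\[
\FIM_{\vect{vv}}=\frac{\eta^2M^2}{\lambda^2}\mat{I}_2 ,
\]
which is full rank, consistent with Theorem~\ref{thm:rank}. Hence
\[
\CRB_{\vect{v}}=\tr\big(\FIM_{\vect{vv}}^{-1}\big)=\frac{2\lambda^2}{\eta^2M^2}.
\]
Substituting $\eta^2=\eta_0^2$ gives Case~A, $\mathrm{RMSE}_{\vect{v}}=\lambda\sqrt{2}/(M\eta_0)$. Substituting $\eta^2=\eta_0^2/M$ gives Case~B, $\mathrm{RMSE}_{\vect{v}}=\lambda\sqrt{2}/(\sqrt{M}\,\eta_0)$.

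\textbf{Main obstacle: justifying the hypotheses of Lemma~\ref{lem:isotropic}.} The computation itself is routine. The delicate point is that the lemma requires the line-of-sight unit vectors $\vect{e}_m$, seen from the target, to be equally spaced in angle with $M\ge3$. I will state explicitly that this is what ``isotropically deployed'' means in the theorem: APs on a regular polygon with the target at its center. I will also note that for $M=2$ the outer-product identity fails, since $\sum_m\cos2\phi_m\neq0$. In that case the FIM is no longer a scaled identity, and the constant in \eqref{eq:scaling} changes. For off-center targets, I would only claim the scaling up to the condition-number factor $\kappa$ of Remark~\ref{rem:cond}.
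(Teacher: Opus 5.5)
Your proposal is correct and follows the paper's proof essentially step for step. You factor the equal-weight double sums, annihilate $\mat{C}$ via the centroid identity, evaluate $\mat{S}_{\mathrm{Tx}}=\mat{S}_{\mathrm{Rx}}=\tfrac{\eta^2M^2}{2}\mat{I}_2$ via the outer-product identity, and read off $\sqrt{\tr(\FIM_{\vect{vv}}^{-1})}$ from the resulting scaled identity; using a single weight $\eta^2$ and substituting at the end is just a compact form of the paper's Case~A/Case~B split.

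One quibble on your side remark: with $M=2$ and both layouts at the lemma's angles $\phi_m=2\pi m/M$, every $\vect{e}_m$ lies on one axis, so $\FIM_{\vect{vv}}$ is singular rather than merely rescaled. This is outside the theorem's $M\ge 3$ hypothesis and does not affect your argument.
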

\begin{IEEEproof}
By Lemma~\ref{lem:isotropic}, $\sum_m \vect{e}_m = \vect{0}$, makes the cross-scatter vanish: $\mat{C} = \eta_0^2\bigl(\sum_{m_t}\vect{e}_{m_t}\bigr)\bigl(\sum_{m_r}\vect{e}_{m_r}\bigr)^\top = \mat{0}$. Applying Lemma~\ref{lem:isotropic} to the scatter matrices yields
$\mat{S}_\mathrm{Tx} = M_r\,\eta_0^2\sum_{m_t}\vect{e}_{m_t}\vect{e}_{m_t}^\top = (M^2\eta_0^2/2)\,\mat{I}_2$
and $\mat{S}_\mathrm{Rx} = (M^2\eta_0^2/2)\,\mat{I}_2$ for Case~A.
For Case~B, replacing $\eta_0^2$ by $\eta_0^2/M$ gives
$\mat{S}_\mathrm{Tx} = \mat{S}_\mathrm{Rx} = (M\eta_0^2/2)\,\mat{I}_2$. Assembling via Theorem~\ref{thm:scatter}: $\FIM_{\vect{vv}} = c\,\mat{I}_2$ with $c^{(A)} = M^2\eta_0^2/\lambda^2$ and $c^{(B)} = M\eta_0^2/\lambda^2$.
The RMSE follows from $\mathrm{RMSE} = \sqrt{\tr(\FIM_{\vect{vv}}^{-1})} = \sqrt{2/c}$.
\end{IEEEproof}

The $M^{-1}$ versus $M^{-1/2}$ gap has a clear interpretation: under fixed per-link SNR, adding APs scales the total Fisher information as $M^2$ (both the number of Tx--Rx pairs and each pair's SNR are preserved); under fixed total power, the per-link SNR decays as $1/M$, so the effective information grows only as $M$, recovering the classical square-root law for power-split multistatic sensing. %Section~\ref{sec:joint_crb} extends this analysis to the joint position--velocity FIM, where the spatial (FIM shape) and temporal (Doppler) information channels are shown to be orthogonal.

%% file: 04_joint_crb.tex
\section{Joint Position--Velocity CRB}
\label{sec:joint_crb}
%======================================================================

Recalling the kinematic state $\boldsymbol{\psi} = [\mathbf{p}_0^\top, \mathbf{v}^\top]^\top \in \mathbb{R}^4$ from Section~II, the corresponding FIM is $4\times 4$.  %this has been def in sectionII review here
This section establishes three key results: 
(i) the block structure of the joint FIM, 
(ii) an asymmetric coupling between position and velocity estimation, and 
(iii) a spatio-temporal orthogonality principle showing that FIM shape optimization affects position but not velocity.

\subsection{Joint FIM Structure}

%The joint parameter vector is $\bm{\psi} = [\vect{p}_0^\top, \vect{v}^\top]^\top \in \mathbb{R}^4$.

\begin{proposition}[Joint $4\times 4$ FIM]
\label{prop:joint_fim}
The joint FIM has the $2\times 2$ block structure
\begin{equation}
\JFIM(\boldsymbol{\psi})
= \begin{bmatrix}
\FIM_{\vect{pp}} & \FIM_{\vect{pv}} \\[4pt]
\FIM_{\vect{pv}}^\top & \FIM_{\vect{vv}}
\end{bmatrix}
\in \mathbb{R}^{4\times 4},
\label{eq:joint_fim}
\end{equation}
where
\begin{align}
\FIM_{\vect{pp}}
&= \FIM_{\vect{pp}}^{(\mathrm{angle})} + \FIM_{\vect{pp}}^{(\mathrm{Doppler})},
\label{eq:Ipp} \\
\FIM_{\vect{pp}}^{(\mathrm{Doppler})}
&= \sum_{m_t,m_r} \eta^2_{m_t,m_r}\, \vect{g}_{m_t,m_r}\vect{g}_{m_t,m_r}^\top,
\label{eq:Ipp_Doppler} \\
\FIM_{\vect{vv}}
&= \frac{1}{\lambda^2}\sum_{m_t,m_r} \eta^2_{m_t,m_r}\,\vect{b}_{m_t,m_r}\vect{b}_{m_t,m_r}^\top,
\label{eq:Ivv} \\
\FIM_{\vect{pv}}
&= \frac{1}{\lambda}\sum_{m_t,m_r} \eta^2_{m_t,m_r}\,\vect{g}_{m_t,m_r}\vect{b}_{m_t,m_r}^\top,
\label{eq:Ipv}
\end{align}
with $\mat{P}_m^\perp \triangleq \mat{I}_2 - \vect{e}_m\vect{e}_m^\top$, $r_m \triangleq \|\vect{p}_0 - \vect{p}_m\|$, and the Doppler--position coupling vector
\begin{equation}
\vect{g}_{m_t,m_r}
= \frac{1}{\lambda}\left(
\frac{\mat{P}_{m_t}^\perp}{r_{m_t}}
+ \frac{\mat{P}_{m_r}^\perp}{r_{m_r}}
\right) \vect{v}.
\label{eq:g_def}
\end{equation}
Since $\vect{g}_{m_t,m_r} \propto \vect{v}$, the coupling vanishes identically for stationary targets: $\FIM_{\vect{pv}} = \FIM_{\vect{pp}}^{(\mathrm{Doppler})} = \mat{0}$ when $\vect{v} = \vect{0}$.
\end{proposition}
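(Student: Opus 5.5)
The plan is to treat the joint FIM as the sum of two independent-noise contributions: a spatial (angle) part and a temporal (Doppler) part. Each part is computed in the Slepian--Bangs / Gaussian-measurement form and then mapped to $\boldsymbol{\psi}$ by the chain rule. Concretely, the per-CPI sufficient statistics are modeled as the bearings $\{\theta_m\}$ and the bistatic Doppler frequencies $\{f_{d,m_t,m_r}\}$. The Doppler information for pair $(m_t,m_r)$ is taken from Appendix~\ref{app:vel_fim}: the Slepian--Bangs formula applied to~\eqref{eq:doppler_signal} with phase $2\pi f_d lT_s$ gives $\sum_l (2\pi lT_s)^2\cdot 2\,\mathrm{SNR}\approx \eta^2_{m_t,m_r}$ per unit of $f_d$. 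The cross information between $f_d$ and the nuisances (the phase of $\alpha$, and the angles) is assumed eliminated by the usual centering of the time index. Under that assumption the pair contributes the rank-one term $\eta^2_{m_t,m_r}\,\nabla_{\boldsymbol{\psi}} f_d\,(\nabla_{\boldsymbol{\psi}} f_d)^\top$.

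The core step is computing $\nabla_{\boldsymbol{\psi}} f_{d,m_t,m_r}$ from~\eqref{eq:doppler_freq}. The velocity gradient is immediate: $\partial f_d/\partial\mathbf{v}=\mathbf{b}_{m_t,m_r}/\lambda$. For the position gradient, I would use the standard Jacobian of a unit vector, $\partial\mathbf{e}_m/\partial\mathbf{p}_0=\mat{P}_m^\perp/r_m$. Since $\mat{P}_m^\perp$ is symmetric, this gives $\partial(\mathbf{e}_m^\top\mathbf{v})/\partial\mathbf{p}_0=\mat{P}_m^\perp\mathbf{v}/r_m$. Summing the Tx and Rx terms and dividing by $\lambda$ yields exactly $\vect{g}_{m_t,m_r}$ in~\eqref{eq:g_def}. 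The outer product of $[\vect{g}^\top,\mathbf{b}^\top/\lambda]^\top$ with itself, weighted by $\eta^2$ and summed over pairs, produces three blocks: $\FIM_{\vect{pp}}^{(\mathrm{Doppler})}$ in~\eqref{eq:Ipp_Doppler}, $\FIM_{\vect{pv}}$ in~\eqref{eq:Ipv}, and $\FIM_{\vect{vv}}$ in~\eqref{eq:Ivv}, which matches~\eqref{eq:vel_fim_total}.

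The angle part is handled separately. The bearings depend only on $\mathbf{p}_0$ and not on $\mathbf{v}$, because of the stop-and-go assumption (A3). They therefore contribute only to the position block, as $\sum_m \nabla_{\mathbf{p}_0}\theta_m\,\FIM_{\theta_m}\,\nabla_{\mathbf{p}_0}\theta_m^\top$ with $\nabla_{\mathbf{p}_0}\theta_m=\mathbf{e}_m^\perp/r_m$, and this sum is what defines $\FIM_{\vect{pp}}^{(\mathrm{angle})}$. They add zero to $\FIM_{\vect{pv}}$ and $\FIM_{\vect{vv}}$. Adding the two parts gives~\eqref{eq:joint_fim}--\eqref{eq:Ipp}. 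The final claim follows at once: $\vect{g}$ is linear in $\mathbf{v}$, so $\mathbf{v}=\mathbf{0}$ forces $\vect{g}=\mathbf{0}$, and hence both the Doppler position block and the coupling block vanish.

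The main obstacle is justifying the additive, block-separable form. This means showing that in the full Slepian--Bangs FIM the cross terms between the angle derivatives $\dot{\mathbf{a}}_m$ and the Doppler-phase derivative, together with the coupling to the unknown complex $\alpha_{m_t,m_r}$ from (A4), either vanish or are negligible. I would first try symmetric time indexing $l-(L-1)/2$, which makes $\sum_l(l-\bar l)=0$ and removes the phase--Doppler coupling. For the angle--Doppler cross terms, I would argue that they are proportional to $\sum_l (l-\bar l)\,\Real\{\cdot\}$ with a snapshot-independent factor when the waveform is stationary (A5), so they also vanish. The resulting $\eta^2$ uses the large-$L$ approximation $\sum_l(l-\bar l)^2\approx L^3/12$, which I would state explicitly as an approximation.
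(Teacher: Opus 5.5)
Your core argument is the paper's proof. You apply the chain rule to $f_{d,m_t,m_r}=\lambda^{-1}\vect{b}_{m_t,m_r}^\top\vect{v}$ with the unit-vector Jacobian $\mat{P}_m^\perp/r_m$, take one rank-one term $\eta^2_{m_t,m_r}[\vect{g};\lambda^{-1}\vect{b}][\vect{g};\lambda^{-1}\vect{b}]^\top$ per pair, sum over pairs, and add an angle FIM confined to the position block. The four blocks, $\vect{g}_{m_t,m_r}$ and the $\vect{v}=\vect{0}$ conclusion all come out correctly. Your attempt to justify the additive angle-plus-Doppler form goes beyond the paper, which simply assumes it.

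That justification, however, does not fit the constant you quote. Under (A4) the phase of $\alpha_{m_t,m_r}$ is an unknown nuisance. Centering the time index is then just the reparametrization $\alpha\mapsto\alpha e^{j2\pi f_d\bar{l}T_s}$, which leaves the CRB of $\boldsymbol{\psi}$ unchanged. The per-pair information on $f_d$ it yields is $\frac{2}{\sigma^2}|g_{m_t,m_r}|^2(2\pi T_s)^2\sum_l(l-\bar{l})^2\approx\frac{2\pi^2L}{3}T_{\mathrm{cpi}}^2\,\mathrm{SNR}_{m_t,m_r}$, where $g_{m_t,m_r}$ is the scalar gain of Appendix~\ref{app:vel_fim}. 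This is a quarter of the paper's $\eta^2_{m_t,m_r}=\frac{8\pi^2L}{3}T_{\mathrm{cpi}}^2\,\mathrm{SNR}_{m_t,m_r}$, which you reproduce in your first paragraph.

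The paper's value comes from the uncentered $\sum_l l^2\approx L^3/3$ of Appendix~\ref{app:vel_fim}. In that parametrization the phase--Doppler and angle--Doppler cross terms carry a factor $\sum_l l\neq 0$ and are not accounted for, so the phase is effectively treated as known. You therefore cannot both remove the coupling by centering and claim agreement with~\eqref{eq:vel_fim_total} under the paper's $\eta^2$. Either adopt the paper's convention and state separability as a modeling assumption, or keep the centering (the consistent treatment under (A4)) and rescale the per-pair weight. The block structure, $\vect{g}_{m_t,m_r}$ and the $\vect{v}=\vect{0}$ claim are unaffected either way.

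A smaller point: the isotropic covariance of (A5) is not what makes the centered cross sums vanish. What you need is a per-snapshot gain independent of $l$, such as the constant beamformed pulse $g_{m_t,m_r}=\alpha_{m_t,m_r}\vect{a}_{m_t}^H\vect{w}_{m_t}$ used in Appendix~\ref{app:vel_fim}. For a random isotropic waveform those sums vanish only on average.
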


\begin{IEEEproof}
Apply the chain rule to $f_{d,m_t,m_r} = \frac{1}{\lambda}\vect{b}_{m_t,m_r}^\top\vect{v}$, noting $\partial f_d/\partial\vect{v} = \frac{1}{\lambda}\vect{b}$ and $\partial f_d/\partial\vect{p}_0 = \vect{g}$ via the unit vector Jacobian $\partial\vect{e}_m/\partial\vect{p}_0 = \mat{P}_m^\perp / r_m$. The per-pair Doppler FIM is $\eta^2 [\vect{g};\;\frac{1}{\lambda}\vect{b}][\vect{g};\;\frac{1}{\lambda}\vect{b}]^\top$; summing over all pairs and adding $\FIM_{\vect{pp}}^{(\mathrm{angle})}$ yields the result.
\end{IEEEproof}

Here, $\vect{b}_{m_t,m_r} = \vect{e}_{m_t} + \vect{e}_{m_r}$ is the bistatic bisector defined in Section~\ref{sec:velocity_crb}, and $\vect{g}_{m_t,m_r}$ captures the Doppler--position coupling arising from the position dependence of the unit direction vectors.

\textit{Insight:} The joint FIM decomposes into \emph{pure spatial} (angle), \emph{pure temporal} (velocity), and \emph{cross-domain} coupling terms, providing a unified view of position–velocity estimation.

\subsection{Asymmetric Coupling}
The cross-block $\FIM_{\vect{pv}}$ in~\eqref{eq:Ipv} gives rise to an asymmetric coupling between position and velocity estimation, which is quantified next.

\begin{proposition}[Asymmetric Coupling]
\label{prop:asym}
Define the angle-only position CRB $\mathrm{CRB}_{\vect{pp}}^{(\mathrm{ang})} \triangleq (\FIM_{\vect{pp}}^{(\mathrm{angle})})^{-1}$ and the marginal velocity CRB $\mathrm{CRB}_{\vect{vv}}^{(\mathrm{vel})} \triangleq \FIM_{\vect{vv}}^{-1}$. Let $\mathrm{CRB}_{\vect{pp}}^{(\mathrm{joint})}$ and $\mathrm{CRB}_{\vect{vv}}^{(\mathrm{joint})}$ denote the corresponding sub-blocks of $\JFIM^{-1}$. Then:
\begin{enumerate}
\item[\emph{(i)}] \textbf{Position benefits:} $\mathrm{CRB}_{\vect{pp}}^{(\mathrm{joint})} \preceq \mathrm{CRB}_{\vect{pp}}^{(\mathrm{ang})}$;
\item[\emph{(ii)}] \textbf{Velocity pays:} $\mathrm{CRB}_{\vect{vv}}^{(\mathrm{joint})} \succeq \mathrm{CRB}_{\vect{vv}}^{(\mathrm{vel})}$.
\end{enumerate}
Equality holds iff $\FIM_{\vect{pv}} = \mat{0}$ (i.e., $\vect{v} = \vect{0}$). The coupling ratio $\rho \triangleq \tr(\mathrm{CRB}_{\vect{vv}}^{(\mathrm{joint})} - \mathrm{CRB}_{\vect{vv}}^{(\mathrm{vel})}) / \tr(\mathrm{CRB}_{\vect{vv}}^{(\mathrm{vel})})$ satisfies $\rho < 0.08$ for $\|\vect{v}\| \leq 50$~m/s (cf.\ Section~\ref{sec:results}).
\end{proposition}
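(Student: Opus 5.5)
The plan is to derive both inequalities from the block-inverse (Schur complement) formulas for $\JFIM^{-1}$ in Proposition~\ref{prop:joint_fim}, combined with the fact that matrix inversion reverses the order $\succeq$ on positive definite matrices. Two standing facts make every inverse well defined. First, $\FIM_{\vect{vv}}\succ\mat{0}$ by Theorem~\ref{thm:rank}(ii) in the cell-free setting. Second, $\FIM_{\vect{pp}}^{(\mathrm{angle})}\succ\mat{0}$, which I would assume from triangulation with at least two distinct bearings. With these in hand,
\begin{align}
\mathrm{CRB}_{\vect{pp}}^{(\mathrm{joint})} &= \bigl(\FIM_{\vect{pp}} - \FIM_{\vect{pv}}\FIM_{\vect{vv}}^{-1}\FIM_{\vect{pv}}^\top\bigr)^{-1}, \\
\mathrm{CRB}_{\vect{vv}}^{(\mathrm{joint})} &= \bigl(\FIM_{\vect{vv}} - \FIM_{\vect{pv}}^\top\FIM_{\vect{pp}}^{-1}\FIM_{\vect{pv}}\bigr)^{-1}.
\end{align}

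For (ii), $\FIM_{\vect{pp}}\succeq\FIM_{\vect{pp}}^{(\mathrm{angle})}\succ\mat{0}$, so the subtracted term $\FIM_{\vect{pv}}^\top\FIM_{\vect{pp}}^{-1}\FIM_{\vect{pv}}$ is positive semidefinite. Hence the Schur complement is $\preceq\FIM_{\vect{vv}}$. It is also positive definite, because $\JFIM\succ\mat{0}$. Inverting then gives $\mathrm{CRB}_{\vect{vv}}^{(\mathrm{joint})}\succeq\FIM_{\vect{vv}}^{-1}$. Since $\FIM_{\vect{pp}}^{-1}\succ\mat{0}$, the subtracted term vanishes iff $\FIM_{\vect{pv}}=\mat{0}$. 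That is exactly the stated equality condition, and it holds in particular when $\vect{v}=\vect{0}$, since then every $\vect{g}_{m_t,m_r}=\vect{0}$.

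For (i), the key step is to avoid comparing $\FIM_{\vect{pp}}$ with the cross term directly. Instead I split off the angle part and write
\begin{equation}
\FIM_{\vect{pp}} - \FIM_{\vect{pv}}\FIM_{\vect{vv}}^{-1}\FIM_{\vect{pv}}^\top = \FIM_{\vect{pp}}^{(\mathrm{angle})} + \mat{\Sigma}_D .
\end{equation}
Here $\mat{\Sigma}_D$ is the Schur complement, with respect to its velocity block, of the Doppler-only $4\times4$ matrix
\begin{equation}
\JFIM_D=\sum_{m_t,m_r}\eta^2_{m_t,m_r}\begin{bmatrix}\vect{g}_{m_t,m_r}\\ \lambda^{-1}\vect{b}_{m_t,m_r}\end{bmatrix}\begin{bmatrix}\vect{g}_{m_t,m_r}\\ \lambda^{-1}\vect{b}_{m_t,m_r}\end{bmatrix}^\top\succeq\mat{0}.
\end{equation}
This matrix is exactly the per-pair Doppler structure used in the proof of Proposition~\ref{prop:joint_fim}. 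The Schur complement of a positive semidefinite matrix with an invertible lower-right block is itself positive semidefinite, so $\mat{\Sigma}_D\succeq\mat{0}$. Therefore the position Schur complement dominates $\FIM_{\vect{pp}}^{(\mathrm{angle})}$, and inverting yields (i).

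The main obstacle is the equality condition in (i). Equality there holds iff $\mat{\Sigma}_D=\mat{0}$, which means the Doppler information about $\vect{p}_0$ is fully absorbed by the velocity nuisance. In terms of the vectors, this happens iff there is a single $2\times2$ matrix $\mat{A}$ with $\vect{g}_{m_t,m_r}=\mat{A}\,\vect{b}_{m_t,m_r}$ for every pair, i.e.\ the stacked vectors $[\vect{g};\lambda^{-1}\vect{b}]$ span at most a 2D subspace. So $\vect{v}=\vect{0}$ is clearly sufficient (take $\mat{A}=\mat{0}$). However, $\FIM_{\vect{pv}}=\mat{0}$ alone does not obviously force $\mat{\Sigma}_D=\mat{0}$, because $\mat{\Sigma}_D$ would then reduce to $\sum\eta^2\vect{g}\vect{g}^\top$. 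I would therefore check whether, for $\vect{v}\neq\vect{0}$ and generic AP geometry, the linear relation $\vect{g}=\mat{A}\vect{b}$ fails. If so, strict improvement holds, and I would state the equality case for (i) precisely as $\vect{v}=\vect{0}$ (generically) rather than as $\FIM_{\vect{pv}}=\mat{0}$.

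The numerical bound $\rho<0.08$ is not part of the analytic argument. I would simply evaluate it from the closed form of (ii),
\begin{equation}
\tr\bigl((\FIM_{\vect{vv}}-\FIM_{\vect{pv}}^\top\FIM_{\vect{pp}}^{-1}\FIM_{\vect{pv}})^{-1}\bigr)/\tr(\FIM_{\vect{vv}}^{-1})-1,
\end{equation}
over the simulated geometries, deferring to Section~\ref{sec:results}.
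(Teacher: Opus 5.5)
Your argument is the paper's own: part (i) adds the PSD Schur complement of the Doppler-only $4\times4$ matrix to $\FIM_{\vect{pp}}^{(\mathrm{angle})}$ and inverts, and part (ii) uses $\FIM_{\vect{pv}}^\top\FIM_{\vect{pp}}^{-1}\FIM_{\vect{pv}}\succeq\mat{0}$ plus order reversal under inversion; you also state the invertibility assumptions that the paper leaves implicit. Your suspicion about the equality case of (i) is correct and goes beyond the paper, whose proof never addresses equality: for a target at the centre of a symmetric layout with $M_t=M_r=4$ and equal weights, the first and third moments of $\{\vect{e}_m\}$ vanish, so $\FIM_{\vect{pv}}=\mat{0}$ even though $\vect{v}\neq\vect{0}$, which makes (ii) tight while (i) stays strict because $\mat{\Sigma}_D=\FIM_{\vect{pp}}^{(\mathrm{Doppler})}\neq\mat{0}$; hence your condition $\vect{g}_{m_t,m_r}=\mat{A}\,\vect{b}_{m_t,m_r}$ is the right one for (i), and only simultaneous equality in (i) and (ii) is equivalent to $\vect{v}=\vect{0}$.
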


\begin{IEEEproof}
The Doppler observations alone contribute a rank-one-per-pair FIM: $\eta^2_{m_t,m_r} [\vect{g};\;\frac{1}{\lambda}\vect{b}][\vect{g};\;\frac{1}{\lambda}\vect{b}]^\top$. Summing over all pairs yields the $4\times 4$ Doppler-only FIM
$\mat{F}^{(\mathrm{D})} = \begin{bmatrix} \FIM_{\vect{pp}}^{(\mathrm{Doppler})} & \FIM_{\vect{pv}} \\ \FIM_{\vect{pv}}^\top & \FIM_{\vect{vv}} \end{bmatrix} \succeq \mat{0},$ which is PSD as a sum of PSD rank-one matrices. Its Schur complement $\Delta_{\vect{pp}} = \FIM_{\vect{pp}}^{(\mathrm{Doppler})} - \FIM_{\vect{pv}}\FIM_{\vect{vv}}^{-1}\FIM_{\vect{pv}}^\top \succeq \mat{0}$ by the Schur complement lemma in~\cite{boyd2004convex}. The effective position FIM is then $\widetilde{\FIM}_{\vect{pp}} = \FIM_{\vect{pp}}^{(\mathrm{angle})} + \Delta_{\vect{pp}} \succeq \FIM_{\vect{pp}}^{(\mathrm{angle})}$; inverting gives part~(i). For velocity, $\widetilde{\FIM}_{\vect{vv}} = \FIM_{\vect{vv}} - \FIM_{\vect{pv}}^\top\FIM_{\vect{pp}}^{-1}\FIM_{\vect{pv}} \preceq \FIM_{\vect{vv}}$; inverting gives part~(ii).
\end{IEEEproof} \begin{remark}
Doppler improves position accuracy by providing additional spatial information, at the cost of a small penalty in velocity estimation.
\end{remark}%\textit{Insight:} Doppler measurements provide  \emph{additional spatial information}, improving position accuracy, but the need to jointly estimate position introduces a small penalty in velocity estimation.

\subsection{Spatial-Temporal Orthogonality}

\begin{theorem}[Spatial-Temporal Orthogonality]
\label{thm:ortho}
Under matched-filter beamforming ($\vect{w}_{m_t} = \vect{a}_{m_t}^*/\|\vect{a}_{m_t}\|$) or, more generally, any beamforming strategy whose per-link transmit gain is independent of the FIM shape parameters, the velocity FIM~$\FIM_{\vect{vv}}$ is independent of $\bm{\zeta}_{m_t}$ and $\bm{\xi}_{m_r}$. Specifically, for any scalar element $\zeta_{m_t,n}$ of~$\bm{\zeta}_{m_t}$ and $\xi_{m_r,n}$ of~$\bm{\xi}_{m_r}$:
\begin{equation}
\frac{\partial \FIM_{\vect{vv}}}
{\partial \zeta_{m_t,n}} = \mat{0},
\qquad
\frac{\partial \FIM_{\vect{vv}}}
{\partial \xi_{m_r,n}} = \mat{0},
\label{eq:ortho_statement}
\end{equation}
for all $m_t, m_r, n$. Consequently, optimizing FIM shape improves position but has no effect on velocity.
\end{theorem}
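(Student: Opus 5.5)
The plan is to start from the explicit form of the velocity FIM in~\eqref{eq:vel_fim_total} (equivalently~\eqref{eq:Ivv} in Proposition~\ref{prop:joint_fim}). There, $\FIM_{\vect{vv}}=\lambda^{-2}\sum_{m_t,m_r}\eta^2_{m_t,m_r}\vect{b}_{m_t,m_r}\vect{b}_{m_t,m_r}^\top$. The parameters $\zeta_{m_t,n}$ and $\xi_{m_r,n}$ can enter this expression only through the bisectors $\vect{b}_{m_t,m_r}$ or through the weights $\eta^2_{m_t,m_r}$. Differentiating by the product rule gives
\begin{equation*}
\frac{\partial \FIM_{\vect{vv}}}{\partial \zeta_{m_t,n}}=\frac{1}{\lambda^2}\sum_{m_t',m_r}\Bigl[\frac{\partial \eta^2_{m_t',m_r}}{\partial \zeta_{m_t,n}}\vect{b}\vect{b}^\top+\eta^2_{m_t',m_r}\Bigl(\frac{\partial \vect{b}}{\partial \zeta_{m_t,n}}\vect{b}^\top+\vect{b}\frac{\partial \vect{b}^\top}{\partial \zeta_{m_t,n}}\Bigr)\Bigr].
\end{equation*}
I will show that both kinds of derivative vanish; the argument for $\xi_{m_r,n}$ is symmetric.

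\emph{Step 1 (geometry).} By~\eqref{eq:doppler_freq}, $\vect{b}_{m_t,m_r}=\vect{e}_{m_t}+\vect{e}_{m_r}$, and $\vect{e}_m=(\vect{p}_0-\vect{p}_m)/\|\vect{p}_0-\vect{p}_m\|$ depends only on the target position and the known AP positions (A2). It does not involve element displacements, because the far-field assumption (A1) makes the whole array look along a single line-of-sight direction. Hence $\partial\vect{b}_{m_t,m_r}/\partial\zeta_{m_t,n}=\vect{0}$. This is Remark~\ref{rmk:two_channels}: the temporal channel is fixed by the bistatic geometry.

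\emph{Step 2 (SNR weights).} I will unpack $\eta^2_{m_t,m_r}=\frac{8\pi^2L}{3}T_{\mathrm{cpi}}^2\,\mathrm{SNR}_{m_t,m_r}$ using the Slepian--Bangs derivation of Appendix~\ref{app:vel_fim} applied to~\eqref{eq:doppler_signal}. The Doppler parameter enters only through the scalar phase $e^{j2\pi f_d lT_s}$. The derivative of the mean with respect to $f_d$ is therefore $j2\pi lT_s$ times the noiseless signal, and this gives
\begin{equation*}
\mathrm{SNR}_{m_t,m_r}\propto |\alpha_{m_t,m_r}|^2\|\vect{a}_{m_r}\|^2\,\tfrac{1}{L}\textstyle\sum_l|\vect{a}_{m_t}^H\vect{x}_{m_t,l}|^2/\sigma^2 .
\end{equation*}
Each factor then needs to be checked:
\begin{itemize}
\item From~\eqref{eq:array_response}, every entry of $\vect{a}_m$ has modulus $1/\sqrt{N_m}$ whatever $\bm{\zeta}_m$ is. So $\|\vect{a}_{m_r}\|^2=1$ is shape-independent, since FIM morphing changes only phases.
\item Under matched-filter beamforming, $|\vect{a}_{m_t}^H\vect{w}_{m_t}|^2=\|\vect{a}_{m_t}\|^2=1$, which is again independent of $\bm{\zeta}_{m_t}$.
\item Under an isotropic sensing covariance (A5), $\vect{a}_{m_t}^H\mat{R}^{(\mathrm{s})}_{m_t}\vect{a}_{m_t}=(P/N_t)\|\vect{a}_{m_t}\|^2$, which is also constant.
\item Under the general hypothesis, the per-link transmit gain is shape-independent by assumption.
\end{itemize}
Hence $\partial\eta^2_{m_t,m_r}/\partial\zeta_{m_t,n}=0$.

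Combining Steps 1 and 2, every term in the product-rule expansion vanishes, which gives~\eqref{eq:ortho_statement}. The position claim then follows from Proposition~\ref{prop:joint_fim}. $\FIM_{\vect{pp}}^{(\mathrm{angle})}$ depends on $\gamma_m(\bm{\zeta}_m)$ in~\eqref{eq:gamma}, which does vary with shape through $\vect{d}_m$ in~\eqref{eq:phase_rate}. So shape optimization acts on the position block only.

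The main obstacle is Step 2. $\eta^2$ is treated as a given constant in the paper, so I must argue carefully that the Slepian--Bangs derivative with respect to $f_d$ does not pick up the angular derivative $\dot{\vect{a}}_m$, which is the shape-dependent quantity. I expect to handle this by noting that $\vect{v}$ and $\bm{\zeta}$ affect the mean through separate factors, scalar phase versus steering vector. For the marginal block $\FIM_{\vect{vv}}$ itself only the $f_d$-derivative is needed, so no $\dot{\vect{a}}$ terms appear. I would also state explicitly that the claim concerns $\FIM_{\vect{vv}}$; $\mathrm{CRB}_{\vect{vv}}^{(\mathrm{joint})}$ can still depend weakly on shape through the Schur complement, consistent with the small $\rho$ in Proposition~\ref{prop:asym}.
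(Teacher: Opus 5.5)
Your proposal is correct and follows essentially the same route as the paper. Its proof likewise argues that $\FIM_{\vect{vv}}$ can depend on the shape only through two things: the bisectors $\vect{b}_{m_t,m_r}$, which are fixed by the macroscopic AP--target geometry, and the weights $\eta^2_{m_t,m_r}$, which are shape-free because $\|\vect{a}_{m_r}\|=1$ and the matched-filter transmit gain equals one; the Slepian--Bangs bookkeeping you describe sits in the paper's appendix. One caveat you share with the paper: $|\vect{a}_{m_t}^H\vect{w}_{m_t}|^2=\|\vect{a}_{m_t}\|^2$ requires $\vect{w}_{m_t}\propto\vect{a}_{m_t}$, the matched filter for the channel $\vect{a}_{m_r}\vect{a}_{m_t}^H$, because with the literal $\vect{a}_{m_t}^*$ the gain is $|\sum_n[\vect{a}_{m_t}]_n^2|^2$, which in general varies with $\bm{\zeta}_{m_t}$.
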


\begin{IEEEproof}
From~\eqref{eq:Ivv}, $\FIM_{\vect{vv}}$ depends on the FIM shape only through (a)~the Doppler SNR factor $\eta^2_{m_t,m_r}$ and (b)~the bisectors $\vect{b}_{m_t,m_r}$. We show that neither depends on $\zeta_{m_t,n}$ or $\xi_{m_r,n}$. \emph{Part (a).} Under matched-filter beamforming, $|\vect{a}_{m_t}^H\vect{w}_{m_t}|^2 = \|\vect{a}_{m_t}\|^2 = 1$ by the array normalization in~\eqref{eq:array_response}. Hence $\mathrm{SNR}_{m_t,m_r} = |\alpha_{m_t,m_r}|^2/\sigma^2$ and $\eta^2_{m_t,m_r}$ depend only on path loss, RCS, and noise—not on $\boldsymbol{\zeta}_{m_t}, \boldsymbol{\xi}_{m_r}$. \emph{Part (b).} The bisector $\vect{b}_{m_t,m_r} = \vect{e}_{m_t} + \vect{e}_{m_r}$ is determined by AP--target macroscopic geometry $\{\vect{p}_m, \vect{p}_0\}$, independent of intra-array element displacements. Combining (a) and (b), $\partial\FIM_{\vect{vv}}/\partial\zeta_{m_t,n} = \mat{0}$ and $\partial\FIM_{\vect{vv}}/\partial\xi_{m_r,n} = \mat{0}$.
\end{IEEEproof}

\begin{remark}[Scope and limitations]
\label{rem:scope}
The orthogonality holds for the marginal velocity FIM~$\FIM_{\vect{vv}}$. The joint velocity CRB depends weakly on the shape through the Schur complement involving~$\FIM_{\vect{pp}}$, but since the coupling ratio~$\rho$ is small, this indirect effect is negligible (velocity CRB variation $<0.01$~dB based on our simulation; cf.\ Section~\ref{sec:results}). %A second indirect channel exists in joint ISAC systems: changing the FIM shape may cause the beamformer~\cite{he2026globecom} to reallocate transmit power to maintain SINR constraints, modifying~$\eta^2$. This is an artifact of shared power budgets rather than a geometric dependency; under fixed sensing power, orthogonality is exact.
A second indirect channel arises in joint ISAC systems with a shared transmit power budget: altering the FIM shape changes the effective sensing channel gain, which in turn prompts the beamformer~\cite{he2026globecom} to reallocate power between sensing and communication to meet the SINR constraint, thereby modifying the per-pair Doppler SNR factor~$\eta^2$. This coupling is an artifact of power sharing, not a geometric dependency, under a fixed sensing power budget, orthogonality is exact.
\end{remark}

%The orthogonality principle simplifies the DT's online optimization: the FIM shape subproblem in Algorithm~\ref{alg:policy} can target position accuracy without incurring any velocity penalty.

The orthogonality principle simplifies the DT’s online optimization, enabling the FIM shape design to improve position accuracy without affecting velocity estimation; the associated algorithmic framework is developed in a subsequent section.

%% file: 05_mismatch_robust.tex
%======================================================================
\section{Mismatch Analysis and Robust FIM Design}
\label{sec:mismatch}
%======================================================================

The preceding analysis assumes perfect alignment between the DT-predicted angle and the true target direction during FIM configuration. In practice, however, the DT provides only a \emph{prediction} $\hat{\theta}_m$ with bounded uncertainty. This section characterizes how such mismatch impacts sensing performance, establishes fundamental degradation laws, and develops a robust FIM design that explicitly accounts for DT uncertainty within the closed-loop architecture.

\subsection{Mismatch Model}

We first analyze a single AP (subscript $m$ omitted for clarity); the per-AP results are applied component-wise to each of the $M_t + M_r$ APs in Section~\ref{sec:robust}. %Let $\hat{\theta}$ denote the DT-predicted angle and $\theta = \hat{\theta} - \Delta\theta$ the true angle. 
Let $\hat{\theta}$ denote the angle predicted by the DT, and let the true angle be $\theta = \hat{\theta} - \Delta \theta$.
The FIM is configured using the boundary-optimal solution~\cite{he2026globecom}, which maximizes the effective aperture at the predicted direction:
\begin{equation}
\zeta_n^*(\hat{\theta}) = -\tilde{\zeta}\,\mathrm{sgn}(\sin 2\hat{\theta}), \quad n = 1,\ldots,N,
\label{eq:zeta_star}
\end{equation}

Given an actual angle $\theta$, the phase-rate is
\begin{equation}
d_n(\theta, \boldsymbol{\zeta})
= (n-1)\pi\cos\theta
- \frac{2\pi}{\lambda}\zeta_n \sin\theta,
\label{eq:mm_phase_rate}
\end{equation}
and the effective aperture is $\gamma(\theta, \boldsymbol{\zeta}) = \frac{1}{N}\|\vect{d}\|^2$, which for a conventional ULA ($\boldsymbol{\zeta} = \vect{0}$) reduces to
\begin{equation}
\gamma_\mathrm{ULA}(\theta) = \frac{\pi^2(N-1)(2N-1)}{6}\cos^2\theta.
\label{eq:gamma_ULA}
\end{equation}

\subsection{Mismatch-Induced Performance Degradation}

\begin{theorem}[Mismatch Effective Aperture and Critical Threshold]
\label{thm:mismatch}
% Let the FIM shape be optimized for $\hat{\theta} \in (-\pi/2, \pi/2)\setminus\{0\}$ via~\eqref{eq:zeta_star}, operating at true angle $\theta = \hat{\theta} - \Delta\theta$. 
Suppose the FIM shape is configured based on the DT-predicted angle, $\hat{\theta} \in (-\pi/2, \pi/2) \setminus \{0\}$, via the boundary-optimal rule~\eqref{eq:zeta_star} while the target is actually located at true angle $\theta$. Denoting the prediction error by $\Delta\theta \triangleq \hat{\theta} - \theta$, the mismatched effective aperture decomposes as
\begin{equation}
\tilde{\gamma}(\Delta\theta)
= \gamma_\mathrm{ULA}(\theta)
+ \Gamma_\times(\Delta\theta)
+ \Gamma_\zeta(\Delta\theta),
\label{eq:mm_gamma_expanded}
\end{equation}
where $\Gamma_\times = \frac{2\pi^2\tilde{\zeta}(N-1)}{\lambda}\,\mathrm{sgn}(\sin 2\hat{\theta})\,\cos\theta\,\sin\theta$ is the cross-term between ULA and morphing contributions, and $\Gamma_\zeta = \frac{4\pi^2\tilde{\zeta}^2}{\lambda^2}\sin^2\!\theta$ is the pure morphing contribution. Define the normalized gain $R(\Delta\theta) \triangleq \tilde{\gamma}(\Delta\theta)/\gamma_\mathrm{ULA}(\theta)$. Then $R$ is strictly decreasing on $[0, |\hat{\theta}|)$ from $R(0) > 1$ to $R(|\hat{\theta}|) = 1$, and the critical mismatch threshold at which FIM ceases to outperform ULA is $\Delta\theta_\mathrm{crit} = |\hat{\theta}|$, beyond which $R(\Delta\theta) < 1$ (FIM underperforms ULA).
\end{theorem}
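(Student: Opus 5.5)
The plan is to expand $\tilde{\gamma}$ directly from the phase-rate~\eqref{eq:mm_phase_rate} with the boundary-optimal shape~\eqref{eq:zeta_star}, normalize by~\eqref{eq:gamma_ULA}, and reduce $R$ to a quadratic in a single monotone variable.

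\emph{Step 1: the decomposition.} Write $s \triangleq \mathrm{sgn}(\sin 2\hat{\theta})$. Substituting $\zeta_n = -\tilde{\zeta}s$ into~\eqref{eq:mm_phase_rate} gives the constant-offset form
\[
d_n = (n-1)\pi\cos\theta + \tfrac{2\pi}{\lambda}\tilde{\zeta}s\sin\theta .
\]
Squaring and averaging over $n$ with $\tfrac1N\sum_n (n-1)^2 = (N-1)(2N-1)/6$ and $\tfrac1N\sum_n (n-1) = (N-1)/2$ produces three terms. The quadratic term is $\gamma_{\mathrm{ULA}}(\theta)$. The cross term is $\tfrac{2\pi^2\tilde{\zeta}(N-1)}{\lambda}s\cos\theta\sin\theta = \Gamma_\times$. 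The pure offset term is $\tfrac{4\pi^2\tilde{\zeta}^2}{\lambda^2}\sin^2\theta = \Gamma_\zeta$. This gives~\eqref{eq:mm_gamma_expanded} as a routine computation.

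\emph{Step 2: reduction to a scalar quadratic.} Assume $\cos\theta\neq 0$, so that $\gamma_{\mathrm{ULA}}>0$, and divide. With $u \triangleq s\tan\theta$ (note $s=\mathrm{sgn}(\tan\hat{\theta})$ on $(-\pi/2,\pi/2)$), this gives
\[
R = 1 + A u + B u^2,\qquad A=\frac{12\tilde{\zeta}}{\lambda(2N-1)}>0,\quad B=\frac{24\tilde{\zeta}^2}{\lambda^2(N-1)(2N-1)}>0 .
\]
By the symmetry $\theta\mapsto-\theta$ (which flips $s$ and leaves $u$ and $\gamma_{\mathrm{ULA}}$ invariant), I will take $\hat{\theta}>0$ without loss of generality. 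I read the mismatch as the error directed toward broadside, i.e.\ $\theta = \hat{\theta}-\Delta\theta$ with $\Delta\theta\ge 0$; for $\hat{\theta}<0$ this corresponds to $\theta=\hat{\theta}+\Delta\theta$, and the statement's convention should be interpreted in this way.

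\emph{Step 3: monotonicity and the threshold.} On $\Delta\theta\in[0,\hat{\theta})$ we have $\theta\in(0,\hat{\theta}]$, so $u=\tan\theta>0$ and $u$ is strictly decreasing in $\Delta\theta$. Since $dR/du = A+2Bu>0$ for $u\ge0$, $R$ is strictly decreasing in $\Delta\theta$. At the endpoints, $R(0)=1+A\tan\hat{\theta}+B\tan^2\hat{\theta}>1$, and $\Delta\theta=\hat{\theta}$ gives $u=0$, hence $R=1$. This identifies $\Delta\theta_{\mathrm{crit}}=|\hat{\theta}|$.

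\emph{Main obstacle: the claim $R<1$ beyond $|\hat{\theta}|$.} For $\Delta\theta>\hat{\theta}$ we have $u<0$, and $R<1$ holds exactly when $u\in(-A/B,0)$, i.e.\ when
\[
|\tan\theta| < \frac{A}{B} = \frac{(N-1)\lambda}{2\tilde{\zeta}} .
\]
Here the cross term is negative and dominates the always-nonnegative $\Gamma_\zeta$. For very oblique true angles the pure morphing term $\Gamma_\zeta$ eventually wins and $R$ returns above $1$. The plan is therefore to state this explicit condition, which is satisfied in the physically relevant regime of small displacements $\tilde{\zeta}\lesssim\lambda$ with $N\ge2$ and true angles not near endfire. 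Under that condition the inversion holds on the whole range of interest, and the two sign regimes of $u$ give the stated picture: FIM outperforms ULA on one side of the threshold and underperforms it on the other.
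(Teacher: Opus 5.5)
Your argument follows the paper's own proof. You substitute the boundary shape, expand into $\gamma_{\mathrm{ULA}}+\Gamma_\times+\Gamma_\zeta$, and write $R=1+c_1\tan\theta+c_2\tan^2\theta$ with $c_1,c_2>0$ (your $A$ and $B$). Monotonicity and $R(|\hat\theta|)=1$ then follow from the behaviour of $\tan\theta$.

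Your two caveats correct the statement itself; they are not gaps in your proof.

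The paper's final step says that for $\Delta\theta>|\hat\theta|$ the sign flip of $\Gamma_\times$, ``while $\Gamma_\zeta\ge 0$,'' drives $R<1$. But $\Gamma_\zeta\ge 0$ pushes $R$ up, not down. As you found, $R<1$ holds exactly when $|\tan\theta|<(N-1)\lambda/(2\tilde\zeta)$. With the paper's $N=8$, $\tilde\zeta=0.5\lambda$ and $\hat\theta=45^\circ$, the true angle $\theta=-85^\circ$ gives $R\approx 3.9$.

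Likewise, for $\hat\theta<0$ the literal convention $\theta=\hat\theta-\Delta\theta$ with $\Delta\theta\ge 0$ moves $\theta$ away from broadside and makes $R$ increase. The paper's ``by symmetry'' therefore silently uses your toward-broadside convention.

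State the explicit bound as a hypothesis of the result. Your informal gloss ($\tilde\zeta\lesssim\lambda$, true angle not near endfire) is loose for small $N$: for example, $N=2$ and $\tilde\zeta=\lambda$ already force $|\theta|<26.6^\circ$.
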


\begin{IEEEproof}
Substituting~\eqref{eq:zeta_star} into~\eqref{eq:mm_phase_rate} gives $d_n = (n-1)\pi\cos\theta + (2\pi\tilde{\zeta}/\lambda)\,\mathrm{sgn}(\sin 2\hat{\theta})\,\sin\theta$. Squaring, summing over $n$, and dividing by $N$ using $\frac{1}{N}\sum_{n=1}^{N}(n-1) = (N-1)/2$ and $\frac{1}{N}\sum_{n=1}^{N}(n-1)^2 = (N-1)(2N-1)/6$ yields~\eqref{eq:mm_gamma_expanded}. For $\hat{\theta} \in (0, \pi/2)$ and $\Delta\theta \in [0, \hat{\theta})$, $\theta \in (0, \hat{\theta}]$ with $\sin\theta, \cos\theta > 0$, so~\eqref{eq:mm_gamma_expanded} and~\eqref{eq:gamma_ULA} give $R = 1 + c_1\tan\theta + c_2\tan^2\!\theta$ with $c_1, c_2 > 0$; since $\tan\theta$ is strictly increasing on $(0, \pi/2)$ and $\theta$ strictly decreases in $\Delta\theta$, $R$ is strictly decreasing in $\Delta\theta$. At $\Delta\theta = \hat{\theta}$, $\theta = 0$ forces $\Gamma_\times = \Gamma_\zeta = 0$ (both carry a $\sin\theta$ factor), hence $R(\hat{\theta}) = 1$. For $\Delta\theta > \hat{\theta}$, $\theta < 0$ flips the sign of $\Gamma_\times$ while $\Gamma_\zeta \geq 0$, driving $R < 1$. The case $\hat{\theta} \in (-\pi/2, 0)$ follows by symmetry $\hat{\theta} \to -\hat{\theta}$.
\end{IEEEproof}

\begin{remark}[Broadside blind spot]
\label{rem:sin_zero}
When $\theta = 0$, the morphing term $-(2\pi/\lambda)\zeta_n\sin 0$ vanishes identically: the wavefront is parallel to the array axis and element displacement produces no path-length difference. Consequently, no choice of $\boldsymbol{\zeta}$ can improve angle resolution over a ULA at broadside, which explains why mismatch is most damaging when the true angle crosses broadside while the design angle is off-broadside. This also motivates excluding $\hat{\theta} = 0$ in Theorem~\ref{thm:mismatch}: at exactly broadside, FIM and ULA coincide and $R \equiv 1$ identically.
\end{remark}

\subsection{Robust FIM Shape Optimization}
\label{sec:robust}

Applying Theorem~\ref{thm:mismatch} to each AP $m$, the mismatch analysis motivates a robust design hedging against DT uncertainty $\theta_m \in [\hat{\theta}_m - \delta_m,\, \hat{\theta}_m + \delta_m]$. By Theorem~\ref{thm:ortho}, FIM shape affects only the position CRB; the velocity CRB is shape-independent. Hence, the robust minimax is formulated over the position CRB alone:
\begin{subequations}\label{eq:robust_P0}
\begin{align}
\min_{\{\boldsymbol{\zeta}_m, \boldsymbol{\xi}_m\}, \{\mat{W}_{m_t}\}} \;
&\max_{\boldsymbol{\theta} \in \Theta_\delta} \;\;
\mathrm{CRB}_{\vect{p}}\!\big(\boldsymbol{\theta}; \{\boldsymbol{\zeta}_m, \boldsymbol{\xi}_m, \mat{W}_{m_t}\}\big)
\label{eq:robust_obj} \\
\text{s.t.} \;\;
&\mathrm{SINR}_k \geq \Gamma_k,\;
\|\mat{W}_{m_t}\|_F^2 \leq P_{m_t},\;
|\zeta_{m,n}| \leq \tilde{\zeta}.
\label{eq:robust_constraints}
\end{align}
\end{subequations}
A second-order Taylor expansion of the per-AP position CRB contribution around $\hat{\theta}_m$ gives
\begin{equation}
\mathrm{CRB}_{\vect{p},m}(\theta_m) \approx h_{0,m}(\boldsymbol{\zeta}_m) + h_{1,m}(\boldsymbol{\zeta}_m)\,\Delta\theta_m + h_{2,m}(\boldsymbol{\zeta}_m)\,\Delta\theta_m^2,
\label{eq:taylor_crb}
\end{equation}
where $\Delta\theta_m = \theta_m - \hat{\theta}_m$ and $h_{0,m}, h_{1,m}, h_{2,m}$ depend on the FIM shape $\boldsymbol{\zeta}_m$, but not on the uncertain angle. The inner maximization over $|\Delta\theta_m| \leq \delta_m$ is a scalar Quadratically Constrained Quadratic Program (QCQP). Introducing an auxiliary epigraph variable $t_m$ and applying the S-procedure~\cite{boyd2004convex} (see Appendix~\ref{app:sdp}), we convert the semi-infinite constraint into a finite Semidefinite Programming (SDP):
\begin{equation}
\min_{\{\boldsymbol{\zeta}_m\},\, \tau_m \geq 0,\, t_m}\;
\sum_m t_m
\quad\text{s.t.}\;\eqref{eq:robust_lmi},\;
|\zeta_{m,n}| \leq \tilde{\zeta},
\label{eq:robust_sdp}
\end{equation}
whose optimizer, denoted $\boldsymbol{\chi}_{\mathrm{rob}}(\hat{\theta}_m, \delta_m)$, is the robust FIM shape used in Algorithm~\ref{alg:policy}.
where the $2\times 2$ LMI constraint for each AP $m$ is
\begin{equation}
\begin{bmatrix}
t_m - h_{0,m}(\boldsymbol{\zeta}_m) - \tau_m\delta_m^2 & -h_{1,m}(\boldsymbol{\zeta}_m)/2 \\[3pt]
-h_{1,m}(\boldsymbol{\zeta}_m)/2 & \tau_m - h_{2,m}(\boldsymbol{\zeta}_m)
\end{bmatrix}
\succeq \mat{0}.
\label{eq:robust_lmi}
\end{equation}
Here $\boldsymbol{\zeta}_m$ enters through coefficients $h_{0,m}, h_{1,m}, h_{2,m}$, while uncertainty $\Delta\theta_m$ has been eliminated by the S-procedure. The auxiliary, $t_m$, acts as the robust upper bound for the position CRB at AP $m$, and the $2\times 2$ LMI structure reflects the scalar nature of the per-AP angular uncertainty. A key qualitative difference from the nominal design: the robust optimum is an interior point ($|\zeta_{m,n}^*| < \tilde{\zeta}$), sacrificing peak gain for broader angular coverage. As $\delta_m \to 0$, $\boldsymbol{\zeta}_m^{*,\mathrm{rob}} \to \boldsymbol{\zeta}_m^*$ (nominal boundary); as $\delta_m \to |\hat{\theta}_m|$, $\boldsymbol{\zeta}_m^{*,\mathrm{rob}} \to \vect{0}$ (ULA).

\begin{remark}[DT fidelity--performance trade-off]
\label{rem:dt_rate}
The achievable FIM gain $G_m(\delta_m)$ at AP $m$ decreases monotonically with DT uncertainty $\delta_m$, establishing a direct link between DT prediction accuracy and sensing performance. Ensuring $\delta_m < |\hat{\theta}_m|$ is necessary to guarantee positive morphing gain at AP $m$, providing a quantitative guideline for the DT synchronization and update policies developed in Section~\ref{sec:dt_loop}.
\end{remark}

Section~\ref{sec:dt_loop} uses $G_m(\delta_m)$ as the key input to derive the optimal DT synchronization period, $T^\star$, in closed form, balancing morphing gain against sensing overhead.

%% file: 06_dt_loop.tex
\section{Digital Twin Closed-Loop Analysis}
\label{sec:dt_loop}
%======================================================================

The preceding sections characterized CRB performance under fixed and perfectly known target parameters. In practice, however, FIM shape optimization relies on the DT's predicted target state, which inevitably degrades between sensing updates. This section develops a DT-in-the-loop framework that captures this dynamic interplay. Specifically, we present the EKF-based state estimator (Section~\ref{sec:ekf}), introduce a twin staleness model linking DT age to FIM degradation (Section~\ref{sec:staleness}), propose a confidence-aware adaptive FIM policy (Section~\ref{sec:policy}), derive the optimal DT synchronization period (Section~\ref{sec:sync}), analyze the prediction horizon (Section~\ref{sec:horizon}), and quantify the benefits of Doppler-enhanced tracking.

% \subsection{EKF State Estimation}
% \label{sec:ekf}

% The EKF state vector is $\mathbf{s} = [\mathbf{p}_0^\top, \mathbf{v}^\top]^\top \in \mathbb{R}^4$ with the constant-velocity transition model $\mathbf{s}(t+1) = \mathbf{F}\mathbf{s}(t) + \mathbf{w}(t)$, where
% \begin{equation}
% \mathbf{F} = \begin{bmatrix} \mathbf{I}_2 & \Delta t\, \mathbf{I}_2 \\ \mathbf{0} & \mathbf{I}_2 \end{bmatrix}, \quad
% \mathbf{Q} = q \begin{bmatrix} \frac{\Delta t^3}{3}\mathbf{I}_2 & \frac{\Delta t^2}{2}\mathbf{I}_2 \\ \frac{\Delta t^2}{2}\mathbf{I}_2 & \Delta t\, \mathbf{I}_2 \end{bmatrix},
% \label{eq:ekf_model}
% \end{equation}
% with $\Delta t$ the update interval and $q$ the process noise power spectral density. The measurement vector comprises angle observations $\{\theta^{(\mathrm{AoD})}_{m_t}, \theta^{(\mathrm{AoA})}_{m_r}\}$ and Doppler observations $\{f_{d,m_t,m_r}\}$, with measurement noise covariances determined by the respective CRBs. The EKF update step produces the posterior state estimate $\hat{\mathbf{s}}(t|t)$ and covariance $\mathbf{P}(t|t)$; the next-epoch prior covariance $\mathbf{P}(t+1|t) = \mathbf{F}\mathbf{P}(t|t)\mathbf{F}^\top + \mathbf{Q}$ is propagated through the CV model.

\subsection{EKF State Estimation}
\label{sec:ekf}

The EKF state vector is $\mathbf{s} = [\mathbf{p}_0^\top, \mathbf{v}^\top]^\top \in \mathbb{R}^4$, propagated by the constant-velocity transition model, $\mathbf{s}_{k} = \mathbf{F}\mathbf{s}_{k-1} + \mathbf{w}_{k-1}$, defined in~\eqref{eq:F_Q}. The measurement vector comprises angle observations $\{\theta^{(\mathrm{AoD})}_{m_t}, \theta^{(\mathrm{AoA})}_{m_r}\}$ and Doppler observations $\{f_{d,m_t,m_r}\}$, with measurement noise covariances determined by the respective CRBs. The EKF update step produces the posterior state estimate, $\hat{\mathbf{s}}(t|t)$, and covariance, $\mathbf{P}(t|t)$; the next-epoch prior, $\mathbf{P}(t+1|t) = \mathbf{F}\mathbf{P}(t|t)\mathbf{F}^\top + \mathbf{Q}$, is propagated through the constant-velocity model.

Since FIM shape decisions are made before the next sensing epoch, we extract the \emph{prior} angular uncertainty at each AP $m$ from $\mathbf{P}(t+1|t)$,
\begin{equation}
\delta_m(t) = \sqrt{[\mathbf{J}_m \mathbf{P}(t+1|t) \mathbf{J}_m^\top]_{1,1}},
\label{eq:delta_from_P}
\end{equation}
where $\mathbf{J}_m = \partial \theta_m / \partial \mathbf{p}_0$ is the angle Jacobian. This quantity feeds directly into the mismatch analysis of Theorem~\ref{thm:mismatch}.

\begin{remark}[Posterior CRB]
\label{rem:pcrb}
The DT's prior covariance, $\mathbf{P}(t+1|t)$, augments the data FIM, 
yielding the posterior Fisher information, $\mathbf{J}_{\mathrm{post}}(t) 
= \mathbf{P}(t+1|t)^{-1} + \mathbf{J}_{\mathrm{data}}(t)$, and posterior 
CRB~\cite{tichavsky1998pcrb}, $\mathrm{PCRB}(t) = \mathrm{tr}(\mathbf{J}_{
\mathrm{post}}^{-1}) \leq \mathrm{CRB}(t)$, with equality iff the DT 
carries no prior. The gap is largest for velocity under single-BS, 
where the data FIM is rank-deficient (Theorem~\ref{thm:rank}) while 
the temporal prior remains full-rank.
\end{remark}

\subsection{Twin Staleness and FIM Gain Degradation}
\label{sec:staleness}

\begin{definition}[Twin Staleness]
\label{def:staleness}
Let $t_0$ denote the last DT update time. The twin staleness, $\tau$, is the elapsed time since that update. Over $[t_0, t_0+\tau]$, the DT operates purely in prediction mode without measurement corrections.
\end{definition}

As $\tau$ increases, prediction errors accumulate, causing the predicted angle, $\bar{\theta}_m$, to drift from the true angle, $\theta_m$. Under the constant-velocity model, this drift is governed by velocity estimation error rather than absolute target speed. The mismatch evolves approximately as
\begin{equation}
\Delta\theta_m(\tau) \approx \omega_m \cdot \tau, \quad \omega_m \triangleq \frac{\sigma_{v}^{(\mathrm{mode})}}{r_m},
\label{eq:angular_drift}
\end{equation}
%where $\sigma_{v}^{(\mathrm{mode})}$ denotes the perpendicular component of the DT's steady-state velocity estimation error under the operating mode (approximated as the total RMSE under isotropic velocity error). For an angle-only EKF, velocity must be inferred indirectly from position changes across time steps, 
%yielding $\sigma_v^{(\mathrm{angle})} \approx 0.6$--$1.1$~m/s depending on target speed; incorporating Doppler provides direct velocity observations, reducing the residual to $\sigma_v^{(\mathrm{Dopp})} \approx 0.15$~m/s (cf.\ Fig.~\ref{fig:dt}).
where $\sigma_{v}^{(\mathrm{mode})}$ denotes the perpendicular component of the DT's steady-state velocity estimation error under the operating mode (approximated as the total RMSE under isotropic velocity error). For an angle-only EKF, velocity must be inferred indirectly from position changes across time steps, producing a nontrivial residual $\sigma_v^{(\mathrm{angle})}$; incorporating Doppler provides direct velocity observations and reduces this residual by roughly an order of magnitude to $\sigma_v^{(\mathrm{Dopp})}$ (cf.\ Fig.~\ref{fig:dt}).

By Theorem~\ref{thm:mismatch}, the per-link FIM gain, $R(\Delta\theta)$, is monotonically decreasing on $[0, |\hat\theta|)$ and crosses the ULA baseline at $\Delta\theta_\mathrm{crit} = |\hat{\theta}|$. The system-level FIM gain, aggregated over all AP links, therefore degrades over the staleness interval:
\begin{equation}
G(\tau) = \frac{\sum_{m} w_m\, \tilde{\gamma}_m\bigl(\Delta\theta_m(\tau)\bigr)}{\sum_{m} w_m\, \gamma_{\mathrm{ULA},m}\bigl(\theta_m - \Delta\theta_m(\tau)\bigr)},
\label{eq:G_tau}
\end{equation}
where $w_m$ weights each link by its SNR contribution. This transforms the static mismatch result of Section~\ref{sec:mismatch} into a time-varying DT fidelity metric.

\subsection{Confidence-Aware Adaptive FIM Policy}
\label{sec:policy}

The theoretical results of Sections~\ref{sec:velocity_crb}--\ref{sec:mismatch} converge into a unified operational policy for DT-driven FIM reconfiguration. At each sensing epoch~$k$, the DT provides a predicted angle $\bar{\theta}_{m,k}$ for each AP~$m$ along with an angular uncertainty $\delta_{m,k}$ derived from the prior covariance via~\eqref{eq:delta_from_P}. The closed-form critical mismatch $\Delta\theta_{\mathrm{crit}} = |\hat{\theta}|$ (Theorem~\ref{thm:mismatch}) directly partitions the uncertainty space into three operating regimes, eliminating the need for iterative robustness optimization at run time.

Algorithm~\ref{alg:policy} formalizes this policy. For each AP, the DT compares $\delta_{m,k}$ against two thresholds:
\begin{itemize}
\item \emph{Confident regime} ($\delta_{m,k} < \delta_{\mathrm{low}}$): nominal FIM optimization (boundary solution~\eqref{eq:zeta_star}) captures full morphing gain with negligible mismatch risk.
\item \emph{Uncertain regime} ($\delta_{\mathrm{low}} \le \delta_{m,k} < |\bar{\theta}_{m,k}|$): mismatch is non-negligible but recoverable. The robust interior shape from the LMI formulation~\eqref{eq:robust_sdp} hedges against worst-case angle deviation (Remark~\ref{rem:dt_rate}).
\item \emph{Degraded regime} 
($\delta_{m,k} \ge |\bar{\theta}_{m,k}|$): 
the uncertainty exceeds the critical mismatch (Theorem~\ref{thm:mismatch}), 
so the policy reverts to a fixed ULA ($\boldsymbol{\zeta}_m = \mathbf{0}$) 
as the safe fallback.
\end{itemize}
After per-AP FIM shapes are determined, the joint beamforming problem is solved via SDR (as in~\cite{he2026globecom}) with the updated array responses. Because $\delta_m$ is computed per AP, different links may operate in different regimes simultaneously, where an AP with accurate DT prediction runs the confident mode while a poorly predicted link falls back to ULA, yielding robustness through architectural diversity.

\begin{algorithm}[t]
\caption{Confidence-Aware Adaptive FIM Policy}
\label{alg:policy}
\begin{algorithmic}[1]
\small
%\Require DT posterior $\hat{\mathbf{s}}_{k|k}$, covariance $\mathbf{P}_{k|k}$, threshold $\delta_{\mathrm{low}}$
\Require DT posterior $\hat{\mathbf{s}}_{k|k}$, covariance $\mathbf{P}_{k|k}$, confidence threshold $\delta_{\mathrm{low}}$, pre-computed robust shape $\boldsymbol{\chi}_{\mathrm{rob}}(\cdot,\cdot)$ from~\eqref{eq:robust_sdp}
\Ensure FIM shapes $\{\boldsymbol{\zeta}_{m_t}^*, \boldsymbol{\xi}_{m_r}^*\}$, beamformers $\{\mathbf{W}_{m_t}^*\}$
\State \textbf{Predict:} Propagate to obtain $\bar{\mathbf{s}}_{k+1|k}$, $\mathbf{P}_{k+1|k}$
\State Compute predicted angles $\{\bar{\theta}_m\}$ from $\bar{\mathbf{s}}_{k+1|k}$
\For{each AP $m \in \mathcal{M}_t \cup \mathcal{M}_r$}
  \State $\delta_m \gets \sqrt{[\mathbf{J}_m \mathbf{P}_{k+1|k} \mathbf{J}_m^\top]_{1,1}}$
  \State Let $\boldsymbol{\chi}_m$ denote $\boldsymbol{\zeta}_{m}$ if $m \in \mathcal{M}_t$, else $\boldsymbol{\xi}_{m}$
  \If{$\delta_m < \delta_{\mathrm{low}}$} \Comment{Confident}
    \State $\boldsymbol{\chi}_m^* \gets -\tilde{\zeta}\, \mathrm{sgn}(\sin 2\bar{\theta}_m)\,\mathbf{1}$ \Comment{Eq.~\eqref{eq:zeta_star}}
  \ElsIf{$\delta_m < |\bar{\theta}_m|$} \Comment{Uncertain}
    \State $\boldsymbol{\chi}_m^* \gets \boldsymbol{\chi}_{\mathrm{rob}}(\bar{\theta}_m, \delta_m)$ \Comment{Eq.~\eqref{eq:robust_sdp}}
  \Else \Comment{Degraded: ULA fallback}
    \State $\boldsymbol{\chi}_m^* \gets \mathbf{0}$
  \EndIf
\EndFor
\State \textbf{Beamform:} Solve SDR with $\{\boldsymbol{\chi}_m^*\}$ to obtain $\{\mathbf{W}_{m_t}^*\}$
\State \textbf{Sense:} Transmit, collect echoes, extract $\{\hat{\theta}_m, \hat{f}_d\}$
\State \textbf{Update:} EKF measurement update $\to$ $\hat{\mathbf{s}}_{k+1|k+1}$, $\mathbf{P}_{k+1|k+1}$
\end{algorithmic}
\end{algorithm}

\begin{remark}[Theory-Enabled Simplicity]
\label{rmk:simplicity}
The closed-form critical mismatch, $\Delta\theta_{\mathrm{crit}} = 
|\hat{\theta}|$ and monotonicity of $G_m(\delta_m)$, 
(Theorem~\ref{thm:mismatch}) reduce the per-AP decision in 
Algorithm~\ref{alg:policy} to threshold comparisons. Conventional 
robust beamforming requires solving an SDP at every epoch with 
per-iteration cost, $\mathcal{O}(N^{3.5})$; here, $\boldsymbol{\chi}_{
\mathrm{rob}}$, is pre-computed offline on a grid of 
$(\bar{\theta}, \delta)$ via the LMI~\eqref{eq:robust_sdp}.
% The per-AP branching in Algorithm~\ref{alg:policy} is intentionally simple, a direct consequence of the closed-form critical mismatch $\Delta\theta_{\mathrm{crit}} = |\hat{\theta}|$ and the monotonicity of $G_m(\delta_m)$ established in Theorem~\ref{thm:mismatch}. Conventional robust beamforming would require solving an SDP at every sensing epoch with per-iteration cost $\mathcal{O}(N^{3.5})$; here, the online decision reduces to threshold comparisons, with the robust shape, $\boldsymbol{\chi}_{\mathrm{rob}}$, pre-computed offline on a grid of $(\bar{\theta}, \delta)$ values via the LMI~\eqref{eq:robust_sdp}.
\end{remark}

\begin{remark}[Computational Complexity]
\label{rmk:complexity}
The online complexity of Algorithm~\ref{alg:policy} decomposes as (i) $\mathcal{O}(M)$ matrix--vector multiplications for $\delta_m$ computation; (ii) $\mathcal{O}(M)$ threshold comparisons with $\mathcal{O}(1)$ lookup per AP for the robust shape; (iii) $\mathcal{O}((M_t N_t)^{3.5})$ for SDR beamforming, identical to the nominal design in~\cite{he2026globecom}. Steps (i)--(ii) add only $\mathcal{O}(M)$ overhead. For the parameters in Table~\ref{tab:params} ($M = 4$, $N_t = 8$), FIM selection requires $< 0.1$~ms on a standard CPU versus $\sim 50$~ms for SDR beamforming, confirming that policy overhead is negligible.
\end{remark}

\begin{table}[t]
\centering
\caption{Online computational complexity. The proposed policy matches the nominal design cost $\mathcal{O}(M)$ while providing worst-case robustness via Theorem~\ref{thm:mismatch}.}
\label{tab:complexity}
\renewcommand{\arraystretch}{1.15}
\begin{tabular}{lcc}
\toprule
\textbf{Strategy} & \textbf{FIM Selection Cost} & \textbf{Guarantee} \\
\midrule
Nominal (no robustness) & $\mathcal{O}(M)$ & None \\
Online minimax robust & $\mathcal{O}(M N^{3.5})$ & Worst-case \\
MPC ($H$-step horizon) & $\mathcal{O}(H \cdot M N^{3.5})$ & Predictive \\
\textbf{Proposed policy} & $\mathcal{O}(M)$ & \textbf{Thm.~\ref{thm:mismatch}} \\
\bottomrule
\end{tabular}
\end{table}

\subsection{Optimal DT Synchronization Period}
\label{sec:sync}

The DT update entails a sensing slot of fixed duration $c_s$ within each update cycle of period $T$, during which communication is paused. Updating too frequently ($T \to c_s$) wastes resources on sensing without communication; updating too infrequently ($T \to \infty$) allows the twin to become stale, degrading FIM gain. We formalize this tradeoff as follows.

\begin{definition}[Effective FIM Gain]
\label{def:J}
The effective FIM gain over one DT cycle of period $T$ is
\begin{equation}
J(T) \triangleq \Bigl(1 - \frac{c_s}{T}\Bigr) \cdot \bar{G}(T),\quad \bar{G}(T) \triangleq \frac{1}{T} \int_0^T G(\tau)\, d\tau,
\label{eq:J_def}
\end{equation}
where $(1 - c_s/T)$ is the communication efficiency and $\bar{G}(T)$ is the time-averaged system FIM gain over the staleness interval.
\end{definition}

\begin{proposition}[Optimal DT Synchronization Period]
\label{prop:sync}
% Under the linear staleness approximation $G(\tau) \approx G_0 - \kappa\,\sigma_v^{(\mathrm{mode})}\,\tau$ valid for $\tau \in [0, \tau_{\mathrm{lin}})$ with $\tau_{\mathrm{lin}} \triangleq G_0/(\kappa\sigma_v^{(\mathrm{mode})})$, where $G_0 \triangleq G(0)$ is the nominal system FIM gain and $\kappa > 0$ is the per-unit-time degradation rate absorbing AP geometry, link count, and perpendicular projection factors from~\eqref{eq:angular_drift}, the effective FIM gain $J(T)$ admits a unique interior maximum on $(c_s, \tau_{\mathrm{lin}})$ at
Under the linear staleness approximation $G(\tau) \approx G_0 - 
\kappa\,\sigma_v^{(\mathrm{mode})}\,\tau$ valid for $\tau \in [0, 
\tau_{\mathrm{lin}})$ with $\tau_{\mathrm{lin}} \triangleq G_0/(\kappa
\sigma_v^{(\mathrm{mode})})$, where $G_0 \triangleq G(0)$ is the 
nominal system FIM gain and $\kappa > 0$ is the degradation rate 
absorbing AP geometry, link count, and projection factors 
from~\eqref{eq:angular_drift}, the effective FIM gain $J(T)$ admits 
a unique interior maximum on $(c_s, \tau_{\mathrm{lin}})$ at
\begin{equation}
T^\star = \sqrt{\frac{2\, c_s\, G_0}{\kappa\, \sigma_v^{(\mathrm{mode})}}}.
\label{eq:T_star}
\end{equation}
$T^\star$ scales as $(\sigma_v^{(\mathrm{mode})})^{-1/2}$ (poorer velocity tracking requires more frequent updates), $c_s^{1/2}$ (higher sensing cost tolerates longer intervals), and $G_0^{1/2}$; it is independent of the FIM morphing range~$\tilde{\zeta}$, which affects $G_0$ but not $\kappa$.
\end{proposition}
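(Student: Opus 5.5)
The plan is to substitute the linear staleness model into Definition~\ref{def:J}, obtain $J(T)$ as an explicit elementary function of $T$, and maximize it by first-order conditions, using concavity for uniqueness. Write $a \triangleq \kappa\,\sigma_v^{(\mathrm{mode})} > 0$, so that $G(\tau) = G_0 - a\tau$ on $[0,\tau_{\mathrm{lin}})$ with $\tau_{\mathrm{lin}} = G_0/a$.

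\emph{Step 1 (time average).} For $T \in (c_s, \tau_{\mathrm{lin}})$ the whole staleness interval lies in the linear regime, so
\begin{equation*}
\bar G(T) = \frac{1}{T}\int_0^T (G_0 - a\tau)\,d\tau = G_0 - \frac{aT}{2}.
\end{equation*}
Restricting to $T<\tau_{\mathrm{lin}}$ is what makes the approximation self-consistent, and it also keeps $G>0$ throughout the cycle.

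\emph{Step 2 (closed form of $J$).} Expanding the product gives
\begin{equation*}
J(T) = \Bigl(1-\frac{c_s}{T}\Bigr)\Bigl(G_0 - \frac{aT}{2}\Bigr) = G_0 + \frac{a c_s}{2} - \frac{aT}{2} - \frac{c_s G_0}{T}.
\end{equation*}
Then $J'(T) = -a/2 + c_s G_0/T^2$ and $J''(T) = -2c_sG_0/T^3 < 0$ for $T>0$. So $J$ is strictly concave on $(0,\infty)$, and any stationary point is the unique global maximizer. Solving $J'(T)=0$ gives $T^\star = \sqrt{2c_sG_0/a}$, which is~\eqref{eq:T_star}.

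\emph{Step 3 (interiority).} This is the only step that is not routine. I must check that $c_s < T^\star < \tau_{\mathrm{lin}}$.
\begin{itemize}
\item $T^\star > c_s$ is equivalent to $c_s < 2G_0/a = 2\tau_{\mathrm{lin}}$.
\item $T^\star < \tau_{\mathrm{lin}}$ is equivalent to $2c_sG_0/a < G_0^2/a^2$, that is, $c_s < \tau_{\mathrm{lin}}/2$.
\end{itemize}
The second condition implies the first. Hence the interior claim requires the sensing slot to be short relative to the linear-validity horizon, $c_s < \tau_{\mathrm{lin}}/2$. I would state this as the operating regime implicit in the proposition; it is the natural regime, since otherwise no useful communication window exists before the twin becomes fully stale. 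Under this condition, strict concavity on the interval gives the unique interior maximum.

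\emph{Step 4 (scalings).} These follow directly from the closed form: $T^\star \propto c_s^{1/2}G_0^{1/2}(\sigma_v^{(\mathrm{mode})})^{-1/2}\kappa^{-1/2}$. For independence from $\tilde\zeta$, I would argue that $\kappa$ is defined through the angular drift rate~\eqref{eq:angular_drift}, i.e.\ through $\omega_m = \sigma_v^{(\mathrm{mode})}/r_m$ and the geometric/projection factors. These contain no morphing-range dependence, whereas $\tilde\zeta$ enters only through $G_0$ via $\Gamma_\times$ and $\Gamma_\zeta$ in Theorem~\ref{thm:mismatch}. This is the weakest point of the argument: the slope of $R(\Delta\theta)$ itself involves $\tilde\zeta$. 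I would therefore present the $\tilde\zeta$-independence as holding under the stated modeling convention that $\kappa$ absorbs only geometry, link count, and projection factors, rather than as an unconditional consequence.
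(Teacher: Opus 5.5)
Your argument is correct and matches the paper's proof: substitute the linear staleness model to get $\bar G(T)=G_0-\kappa\sigma_v^{(\mathrm{mode})}T/2$, differentiate $J(T)=(1-c_s/T)\bar G(T)$, and use $J''(T)=-2c_sG_0/T^3<0$ for uniqueness. The paper's proof leaves two points implicit, asserting interiority and the scaling relations by inspection. Your Step~3 makes the first explicit: $c_s<\tau_{\mathrm{lin}}/2$ is needed for $T^\star$ to actually lie in $(c_s,\tau_{\mathrm{lin}})$. Your caveat makes the second explicit: the $\tilde{\zeta}$-independence holds only under the convention that $\kappa$ carries no morphing-range dependence.
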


\begin{IEEEproof}
Under the linear approximation, $\bar{G}(T) = G_0 - \kappa\sigma_v^{(\mathrm{mode})}T/2$. Substituting into~\eqref{eq:J_def} gives
$J(T) = (1 - c_s/T)(G_0 - \kappa\sigma_v^{(\mathrm{mode})} T/2)$.
Differentiation yields $dJ/dT = c_s G_0/T^2 - \kappa\sigma_v^{(\mathrm{mode})}/2$, vanishing uniquely at $T^\star$ in~\eqref{eq:T_star}. The second derivative $d^2J/dT^2 = -2c_s G_0/T^3 < 0$ for all $T > 0$, so $J$ is strictly concave on $(c_s, \tau_{\mathrm{lin}})$ and $T^\star$ is the unique maximum. Scaling relations follow by inspection of~\eqref{eq:T_star}.
\end{IEEEproof}

\begin{remark}[Physical interpretation]
\label{rmk:Tstar_physics}
A vehicular target ($\|\mathbf{v}\| = 20$~m/s, $\sigma_v^{(\mathrm{angle})} \approx 1.1$~m/s) requires $T^\star \approx 5.6$~s under angle-only tracking while a pedestrian ($\|\mathbf{v}\| = 2$~m/s, $\sigma_v^{(\mathrm{angle})} \approx 0.56$~m/s) allows $T^\star \approx 15.5$~s. The relatively weak dependence on target speed reflects that the angle-only EKF \emph{does} estimate velocity, and it simply does so with larger residual error at higher speeds. The independence from $\tilde{\zeta}$ means investing in a wider morphing range increases the \emph{nominal} gain $G_0$ but does not change the \emph{optimal update rate}: DT synchronization design is decoupled from hardware specification.
\end{remark}
% \begin{remark}[Doppler-enhanced DT relaxes the update rate]
% \label{rmk:doppler_relax}
% Substituting the two measurement modes into~\eqref{eq:T_star}, the relaxation factor is
% $T^\star_{\mathrm{Dopp}}/T^\star_{\mathrm{angle}} = \sqrt{\sigma_v^{(\mathrm{angle})}/\sigma_v^{(\mathrm{Dopp})}} \approx \sqrt{0.8/0.15} \approx 2.3$ (Fig.~\ref{fig:dt}b). This reduction in update frequency translates directly to improved communication throughput, confirming that Doppler sensing is not merely an additional output but an enabler of efficient DT operation.
% \end{remark}

% \label{rmk:doppler_relax}
% Substituting the two measurement modes into~\eqref{eq:T_star}, the relaxation factor $T^\star_{\mathrm{Dopp}}/T^\star_{\mathrm{angle}} = \sqrt{\sigma_v^{(\mathrm{angle})}/\sigma_v^{(\mathrm{Dopp})}}$ grows with target speed: $1.3\times$ at $|\mathbf{v}|=2$~m/s, $2.4\times$ at $10$~m/s, and $3.6\times$ at $20$~m/s (Fig.~\ref{fig:dt_sync}b). The relaxation is most pronounced for fast-moving targets, where angle-only velocity inference is least accurate---a regime where DT efficiency matters most. This confirms that Doppler sensing is not merely an additional output but an enabler of efficient DT operation.
% \end{remark}

\begin{remark}[Doppler-enhanced DT relaxes the update rate]
\label{rmk:doppler_relax}
Substituting the two measurement modes into~\eqref{eq:T_star}, the relaxation factor $T^\star_{\mathrm{Dopp}}/T^\star_{\mathrm{angle}} = \sqrt{\sigma_v^{(\mathrm{angle})}/\sigma_v^{(\mathrm{Dopp})}}$ grows with target speed: $1.3\times$ at $|\mathbf{v}|=2$~m/s, $2.4\times$ at $10$~m/s, and $3.6\times$ at $20$~m/s (Fig.~\ref{fig:dt_sync}b). The relaxation is most pronounced for fast-moving targets, where angle-only velocity inference is least accurate, a regime where DT efficiency matters most. This confirms that Doppler sensing is not merely an additional output but an enabler of efficient DT operation.
\end{remark}

\subsection{Prediction Horizon Phase Transition}
\label{sec:horizon}

\begin{proposition}[Prediction Horizon]
\label{prop:horizon}
Define the prediction horizon $\tau_{\max}$ as the maximum pure-prediction time for which $\mathrm{RMSE}_p(\tau) \leq \epsilon$ for a given threshold~$\epsilon$. Then:
\begin{enumerate}[]
\item \emph{$M = 1$ (single-BS):} $\tau_{\max}$ is severely limited by the rank-1 velocity FIM, which leaves the tangential velocity unobservable.
\item \emph{$M \geq 2$ (cell-free):} the rank transition (Theorem~\ref{thm:rank}) enables full 2D velocity estimation, yielding $\tau_{\max} \propto \sqrt{\epsilon / \lambda_{\max}(\mathbf{P}_{vv})}$, which is finite and well-conditioned.
\item \emph{$M > 2$:} $\tau_{\max}$ saturates, with diminishing returns from additional APs.
\end{enumerate}
\end{proposition}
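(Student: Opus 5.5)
The plan is to model pure prediction as open-loop propagation of the constant-velocity model~\eqref{eq:F_Q}. Starting from the steady-state posterior covariance $\mathbf{P}$ at the last update, with blocks $\mathbf{P}_{pp},\mathbf{P}_{pv},\mathbf{P}_{vv}$, the predicted position covariance after staleness $\tau$ is
\begin{equation}
\mathbf{P}_{pp}(\tau) = \mathbf{P}_{pp} + \tau(\mathbf{P}_{pv}+\mathbf{P}_{pv}^\top) + \tau^2\mathbf{P}_{vv} + q\,\tfrac{\tau^3}{3}\mathbf{I}_2 .
\end{equation}
The squared RMSE is then $\mathrm{RMSE}_p^2(\tau)=\tr(\mathbf{P}_{pp}(\tau))$. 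This is a polynomial in $\tau$, and its growth for moderate $\tau$ is dominated by the term $\tau^2\tr(\mathbf{P}_{vv})$, which lies between $\tau^2\lambda_{\max}(\mathbf{P}_{vv})$ and $2\tau^2\lambda_{\max}(\mathbf{P}_{vv})$. I will take the regime where the initial error and the process noise are small relative to $\epsilon$; the statement's ``$\propto$'' already implies this. Solving $\tau^2\lambda_{\max}(\mathbf{P}_{vv})\asymp\epsilon$ gives $\tau_{\max}\propto\sqrt{\epsilon/\lambda_{\max}(\mathbf{P}_{vv})}$. The scaling is to be read with $\epsilon$ as a squared-error threshold, or equivalently $\epsilon^2$ if $\epsilon$ bounds the RMSE; I will state this convention explicitly.

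Each of the three cases then reduces to bounding $\lambda_{\max}(\mathbf{P}_{vv})$.

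\emph{Case $M=1$.} Theorem~\ref{thm:rank}(i) gives $\FIM_{\vect{vv}}$ of rank one with null direction $\vect{e}^\perp$. The Doppler data therefore adds no information on the tangential velocity. The steady-state tangential variance is set only by the angle-only EKF's finite-difference inference and the process noise $q$, so it is large (of order $\sigma_v^{(\mathrm{angle})2}$). Consequently $\lambda_{\max}(\mathbf{P}_{vv})$ is large and $\tau_{\max}$ is small. The argument is a comparison: by Loewner monotonicity of the Riccati recursion in the measurement information, the $M=1$ steady state dominates the cell-free one in the tangential direction.

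\emph{Case $M\ge2$.} Theorem~\ref{thm:rank}(ii) makes $\FIM_{\vect{vv}}\succ\mathbf{0}$. Using the EKF information update $\mathbf{P}^{-1}\succeq\mathbf{P}(t|t-1)^{-1}+\JFIM$ together with the Schur-complement bound from Proposition~\ref{prop:asym}, I will show $\mathbf{P}_{vv}\preceq\widetilde{\FIM}_{\vect{vv}}^{-1}$. This makes $\lambda_{\max}(\mathbf{P}_{vv})$ finite, and since $\rho$ is small, of order $\lambda_{\min}(\FIM_{\vect{vv}})^{-1}$. Hence $\tau_{\max}$ is finite and well-conditioned, with the condition-number factor of Remark~\ref{rem:cond}.

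\emph{Case $M>2$.} Theorem~\ref{thm:scaling} gives $\lambda_{\max}(\mathbf{P}_{vv})\lesssim \lambda^2/(M\eta_0^2)$ under fixed total power. This bound keeps decreasing with $M$, but the steady-state $\mathbf{P}_{vv}$ is floored by the process noise: the Riccati fixed point satisfies $\mathbf{P}_{vv}\succeq$ a $q\Delta t$-dependent term that is independent of $M$. Therefore $\lambda_{\max}(\mathbf{P}_{vv})$ converges to this floor and $\tau_{\max}$ saturates, with increments decaying like $M^{-1/2}$ relative to the floor.

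The main obstacle is making the claims ``severely limited'' and ``saturates'' rigorous through the steady-state Riccati solution rather than the one-shot CRB. My first attempt is a scalar surrogate: project onto the tangential velocity direction and use the closed-form steady state of a two-state (position, velocity) CV Kalman filter. That gives explicit floors in terms of $q$ and the per-direction measurement information, and I would then argue the qualitative statements from there.
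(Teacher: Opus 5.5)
Your propagation step and the extraction of the $\tau^2\mathbf{P}_{vv}$ term match the paper's proof. You are also right that $\sqrt{\epsilon/\lambda_{\max}(\mathbf{P}_{vv})}$ only makes sense if $\epsilon$ bounds the squared error.

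For (i)--(ii) you take a different route from the paper. The paper treats $\mathbf{P}_{vv}(0)$ as the inverse of the per-epoch data FIM, so for $M=1$ it simply has an infinite tangential eigenvalue. You instead use the EKF steady state: $\mathbf{P}_{vv}$ is finite but large for $M=1$, and $\mathbf{P}_{vv}\preceq\widetilde{\FIM}_{\vect{vv}}^{-1}$ for $M\ge 2$ via $\mathbf{P}^{-1}\succeq\JFIM$ and the Schur complement. That route is fine, and it sits better with the finite ``$8\times$'' horizon ratio the paper reports. One caveat: your Loewner comparison in (i) needs the single-BS information to be dominated by the cell-free information. That holds for nested deployments at fixed per-link SNR, but not automatically under fixed total power, nor against the paper's $N=32$ single-BS baseline. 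The paper itself gives no argument at all for (iii).

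Your argument for (iii) is where the proposal breaks. You assert an $M$-independent lower bound on the steady-state $\mathbf{P}_{vv}$, proportional to $q\Delta t$. But your $\mathbf{P}_{vv}$ is the posterior from which prediction starts, and you have just shown $\mathbf{P}_{vv}\preceq\widetilde{\FIM}_{\vect{vv}}^{-1}\approx(\lambda^2/(M\eta_0^2))\,\mathbf{I}_2\to\mathbf{0}$. These two claims contradict each other. The filtered Riccati fixed point only satisfies $\mathbf{P}\succeq(\mathbf{Q}^{-1}+\JFIM)^{-1}$, which vanishes as $\JFIM$ grows with $M$. Only the predicted covariance $\mathbf{P}(t+1|t)\succeq\mathbf{Q}$ has an $M$-independent floor. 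That floor is exactly the process noise accumulated during prediction, i.e., the $q\tau^3/3$ term you discarded by assuming process noise is negligible relative to $\epsilon$.

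Inside your stated regime, $\tau_{\max}\propto\lambda_{\max}(\mathbf{P}_{vv})^{-1/2}$ grows without bound (like $M^{1/2}$ under fixed total power), so no saturation can follow. The fix is to keep the full polynomial. Since $\mathbf{P}(0)\preceq\JFIM^{-1}\to\mathbf{0}$, you get $\mathrm{RMSE}_p^2(\tau)\to\tfrac{2q}{3}\tau^3$. Hence $\tau_{\max}$ tends to the $M$-independent ceiling $(3\epsilon/(2q))^{1/3}$, with $\epsilon$ a squared threshold. Cases (ii) and (iii) are then two regimes of one expression, distinguished by whether $\tau^2\lambda_{\max}(\mathbf{P}_{vv})$ or $q\tau^3/3$ dominates at the threshold. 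Your claimed $M^{-1/2}$ decay of the increments should be derived from that crossover rather than asserted.
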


\begin{IEEEproof}
The position prediction error covariance after $\tau$ steps of pure prediction is $\mathbf{P}(\tau) = \mathbf{F}^\tau \mathbf{P}(0) (\mathbf{F}^\tau)^\top + \sum_{i=0}^{\tau-1}\mathbf{F}^i\mathbf{Q}(\mathbf{F}^i)^\top$. Extracting the position subblock and noting that $[\mathbf{F}^\tau]_{pp} = \mathbf{I}_2$, $[\mathbf{F}^\tau]_{pv} = \tau\Delta t\,\mathbf{I}_2$, the dominant term at large $\tau$ is $\mathbf{P}_{pp}(\tau) \approx \tau^2\Delta t^2\, \mathbf{P}_{vv}(0)$. For $M = 1$, $\mathbf{P}_{vv}(0)$ has one infinite eigenvalue (the tangential direction is unobservable from rank-1 velocity FIM), so $\mathrm{RMSE}_p(\tau)$ grows unboundedly regardless of $\tau$. For $M \geq 2$, $\mathbf{P}_{vv}(0)$ is finite and well-conditioned, giving $\tau_{\max} \propto \sqrt{\epsilon / \lambda_{\max}(\mathbf{P}_{vv})}$. Numerically, the prediction horizon increases by approximately $8\times$ at the $M=1\to 2$ transition, as a direct consequence of removing the infinite eigenvalue (cf.\ Fig.~\ref{fig:dt}).
\end{IEEEproof}

Incorporating Doppler measurements into the EKF provides two complementary benefits beyond the prediction horizon extension: (i) direct velocity observation, avoiding the need to infer velocity from position differences across time steps; and (ii) position refinement through the Doppler--position coupling term $\FIM_{\vect{pp}}^{(\mathrm{Doppler})}$ in~\eqref{eq:Ipp_Doppler} (Proposition~\ref{prop:joint_fim}). Section~\ref{sec:results} quantifies these benefits, showing a $2.3\times$ relaxation in DT update rate and decomposing the $23$~dB tracking gain into contributions from angular diversity, Doppler fusion, and FIM morphing.

%% file: 07_results.tex
\section{Numerical Results}
\label{sec:results}
%======================================================================
We validate the theoretical findings via Monte Carlo simulations with parameters in Table~\ref{tab:params}. All CRB values are averaged over $N_{\mathrm{mc}} = 200$ independent target realizations uniformly distributed inside the AP deployment region. Four baseline schemes are compared throughout: \emph{CF+FIM} (proposed), \emph{CF+Fixed}, \emph{SB+FIM}, and \emph{SB+Fixed}.
\begin{table}[h!]
\centering
\caption{Simulation parameters.}
\label{tab:params}
\begin{tabular}{l c c}
\hline
\textbf{Parameter} & \textbf{Symbol} & \textbf{Value} \\
\hline
Carrier frequency       & $f_c$                      & 28~GHz       \\
Wavelength              & $\lambda$                  & 10.71~mm     \\
Tx/Rx elements per AP   & $N_t,\,N_r$                & 8            \\
Tx/Rx AP count          & $M_t,\,M_r$                & 4            \\
FIM morphing range      & $\tilde{\zeta},\,\tilde{\xi}$ & $0.5\lambda$ \\
AP deployment radius    & $R_{\mathrm{AP}}$          & 40~m         \\
Waveform cross-corr.    & $\rho_w$                   & 0.3          \\
CPI duration            & $T_{\mathrm{cpi}}$         & 1~ms         \\
Target velocity         & $\mathbf{v}$               & $[3,\,4]^\top$~m/s \\
Process noise PSD       & $q$                        & 1.0~m$^2$/s$^3$ \\
% Monte Carlo runs        & $N_{\mathrm{mc}}$          & 200          \\
% SNR range / default     & ---                        & $[-10,\,30]$~dB / 10~dB \\
\hline
\end{tabular}
\end{table}

\subsection{CRB Validation}
\label{sec:res_crb}

\subsubsection{Rank Transition (Theorem~\ref{thm:rank})}
\vspace{-8pt}
\begin{figure}[ht]
\centering
\captionsetup{font=footnotesize}
\includegraphics{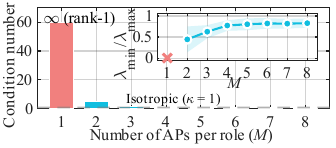}
\vspace{-6pt}
\caption{Velocity FIM rank transition vs.\ number of APs $M$%: $\kappa \to \infty$ at $M=1$ (rank-1, radial only), dropping to $\kappa \in [1.23, 1.33]$ for $M \geq 4$ (rank-2, full 2D). Inset shows the eigenvalue ratio converging to $0.83$.
}
\label{fig:rank}
\end{figure}
Fig.~\ref{fig:rank} validates Theorem~\ref{thm:rank}. For $M = 1$, the velocity FIM has $\kappa \to \infty$, confirming that only radial velocity is estimable when all bisectors collapse to a single direction. As soon as $M \geq 2$, the condition number drops to $\kappa = 5.1$ at $M=2$ and further to $\kappa \approx 1.23$--$1.33$ for $M \geq 4$, with the eigenvalue ratio stabilizing at $0.77$--$0.83$. The transition is abrupt: the minimum cell-free configuration ($M = 2$) already captures the qualitative rank-2 advantage; further APs refine accuracy without changing the rank.

\subsubsection{Velocity RMSE Scaling (Theorem~\ref{thm:scaling})}
%\vspace{-2pt}
\begin{figure}[ht]
\centering
\captionsetup{font=footnotesize}
\includegraphics{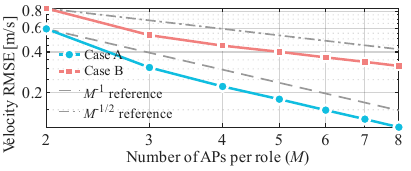}
\vspace{-6pt}
\caption{Velocity RMSE vs.\ $M$ %measured slopes $-1.03$ (Case A, matching $M^{-1}$) and $-0.53$ (Case B, matching $M^{-1/2}$).
}
\label{fig:vel_scaling}
\end{figure}
Fig.~\ref{fig:vel_scaling} validates the scaling laws in Theorem~\ref{thm:scaling}. Under Case~A (fixed per-pair SNR), the measured slope is $-1.03$, matching $M^{-1}$; under Case~B (fixed total power), the slope is $-0.53$, matching $M^{-1/2}$. The distinction arises because Case~A benefits from both additional measurements and maintained per-pair quality while Case~B trades per-pair SNR for spatial diversity. Position RMSE scales as $M^{-1.3}$, steeper than velocity's $M^{-1.0}$, due to a double benefit: more angular measurements and improved Jacobian conditioning.

\subsubsection{Spatial-Temporal Orthogonality (Theorem~\ref{thm:ortho})}
%\vspace{-2pt}
\begin{figure}[ht]
\centering

\captionsetup{font=footnotesize}\includegraphics{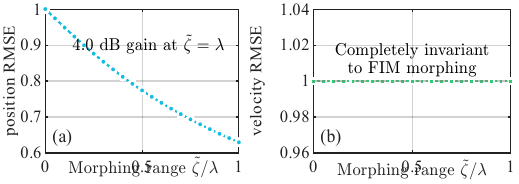}
\vspace{-18pt}
\caption{Normalized CRB vs.\ FIM morphing range $\tilde{\zeta}/\lambda$%: position CRB decreases by $4.0$~dB at $\tilde{\zeta}=\lambda$, while velocity CRB variation $<10^{-4}$ (Theorem~\ref{thm:ortho}).
}
\label{fig:orthogonality}
\end{figure}
Fig.~\ref{fig:orthogonality} validates Theorem~\ref{thm:ortho}. Position CRB decreases monotonically with $\tilde{\zeta}$, achieving 4.0~dB gain at $\tilde{\zeta} = \lambda$, while velocity CRB remains exactly constant (variation $< 10^{-4}$) across the entire morphing range. This invariance confirms that the velocity FIM depends on the bistatic bisectors $\{\vect{b}_{m_t,m_r}\}$, not on element displacements, through the temporal phase progression $e^{j2\pi f_d lT_s}$. The practical implication is that FIM shape can be optimized purely for position without any velocity penalty, eliminating the need for a position--velocity tradeoff in metasurface design.

\subsubsection{Mismatch Analysis (Theorem~\ref{thm:mismatch})}
\vspace{-8pt}
\begin{figure}[ht]
\centering 
\captionsetup{font=footnotesize}
\includegraphics[width=\linewidth]{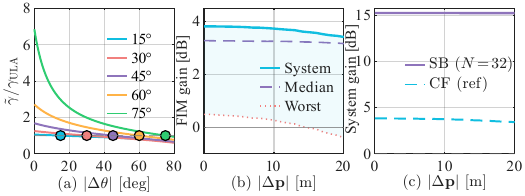}
\vspace{-15pt}
\caption{Mismatch robustness under DT prediction error. %(a)~per-link aperture ratio vs.\ angular mismatch; (b)~cell-free system gain vs.\ position error; (c)~single-BS contrast.
}
\label{fig:mismatch}
\end{figure}
% Fig.~\ref{fig:mismatch} validates Theorem~\ref{thm:mismatch} across three complementary views. Panel~(a) confirms that for each nominal angle $\hat{\theta}$, the normalized effective aperture $\tilde{\gamma}/\gamma_{\mathrm{ULA}}$ drops below unity at exactly $|\Delta\theta| = |\hat{\theta}|$, matching the closed-form critical mismatch~\eqref{eq:mm_delta_crit_exact}: at this point the true angle reaches broadside where element displacement produces zero path-length difference. Panel~(b) quantifies the system-level impact under cell-free deployment. The aggregate localization gain degrades gracefully from $3.9$~dB to $3.5$~dB as the DT position error grows to $20$~m. Although the worst per-link gain turns negative at $|\Delta\mathbf{p}| \approx 16$~m (exceeding $\Delta\theta_{\mathrm{crit}} = |\hat\theta|$ for the closest-to-broadside AP, Theorem~\ref{thm:mismatch}), the system-level gain remains strictly positive, confirming that angular diversity from $M=4$ distributed APs rescues overall performance. Panel~(c) provides the critical contrast: a single-BS deployment with equal total antenna count ($N=32$) lacks this diversity, and its system gain collapses far more rapidly, eventually falling below the ULA baseline. This comparison highlights that the robustness in panel~(b) arises specifically from the \emph{geometric diversity of observation angles} unique to cell-free architectures, not merely from averaging more links.
Fig.~\ref{fig:mismatch} validates Theorem~\ref{thm:mismatch} across three complementary views. 
%\emph{Panel~(a), per-link aperture ratio vs.\ angular mismatch}: for each nominal angle $\hat{\theta}$, the normalized effective aperture $\tilde{\gamma}/\gamma_{\mathrm{ULA}}$ drops below unity at exactly $|\Delta\theta| = |\hat{\theta}|$, matching the closed-form critical mismatch~\eqref{eq:mm_delta_crit_exact}: at this point the true angle reaches broadside where element displacement produces zero path-length difference. 
\emph{Panel (a), per-link aperture ratio versus angular mismatch}: for each nominal angle $\hat{\theta}$, the normalized effective aperture $\tilde{\gamma}/\gamma_{\mathrm{ULA}}$ falls below unity exactly at $|\Delta\theta| = |\hat{\theta}|$, consistent with the closed-form critical mismatch from Theorem~\ref{thm:mismatch}. At this point, the true angle reaches broadside, where element displacement induces zero path-length difference.
\emph{Panel~(b), cell-free system gain vs.\ position error}: the aggregate localization gain degrades gracefully from $3.9$~dB to $3.5$~dB as the DT position error grows to $20$~m. Although the worst per-link gain turns negative at $|\Delta\mathbf{p}| \approx 16$~m (exceeding $\Delta\theta_{\mathrm{crit}} = |\hat\theta|$ for the closest-to-broadside AP, Theorem~\ref{thm:mismatch}), the system-level gain remains strictly positive, confirming that angular diversity from $M=4$ distributed APs rescues overall performance. \emph{Panel~(c), single-BS contrast}: a single-BS deployment with equal total antenna count ($N=32$) lacks this diversity, and its system gain collapses far more rapidly, eventually falling below the ULA baseline. This comparison highlights that the robustness in panel~(b) arises specifically from the \emph{geometric diversity of observation angles} unique to cell-free architectures, not merely from averaging more links.

\subsection{ISAC Pareto and DT Tracking}
\label{sec:res_isac_dt}

\subsubsection{ISAC Pareto Tradeoff}
\begin{figure}[ht]
\centering
\vspace{-8pt}
\captionsetup{font=footnotesize}
\includegraphics{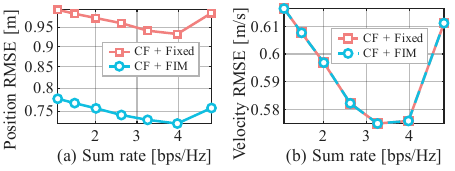}
\vspace{-4pt}
\caption{ISAC Pareto frontier%: position and velocity sensing RMSE vs.\ sum-rate.
}
\label{fig:pareto}
\end{figure}
% Fig.~\ref{fig:pareto} presents the ISAC Pareto frontier obtained by sweeping the per-user SINR threshold $\Gamma_k$, extending the MRT-plus-isotropic beamforming framework of~\cite{he2026globecom} (originally applied to position-only CRB) to the joint position--velocity setting. Each marker corresponds to a different $\Gamma_k$ value averaged over $80$ random target positions. In panel~(a), the CF+FIM frontier lies strictly below its CF+Fixed counterpart across the entire rate range, with a gap of $1.8$--$2.3$~dB that is widest under tight communication budgets (low sum-rate) and narrows as the SINR constraint relaxes. This is consistent with the aperture-gain analysis in Theorem~\ref{thm:mismatch}: when most power is devoted to sensing, the full morphing advantage is realized; when communication consumes the majority of the power budget, the residual sensing power diminishes and the absolute FIM improvement shrinks. Panel~(b) validates Theorem~\ref{thm:ortho} in the optimization context: the two velocity Pareto curves are visually indistinguishable, confirming that FIM shape optimization has \emph{zero} effect on velocity CRB regardless of the communication--sensing power split. This decoupling has a practical design implication: position and velocity estimation can be optimized through independent subsystems---FIM morphing for position and bistatic geometry for velocity---without cross-interference, simplifying the overall ISAC resource allocation.
Fig.~\ref{fig:pareto} presents the ISAC Pareto frontier obtained by sweeping the per-user SINR threshold $\Gamma_k$, extending the MRT-plus-isotropic beamforming framework of~\cite{he2026globecom} (originally applied to position-only CRB) to the joint position--velocity setting. Each marker averages over $80$ random target positions. In panel~(a), the CF+FIM frontier lies strictly below CF+Fixed with a gap of $1.8$ to $2.3$~dB, widest under tight communication budgets and narrowing as the SINR constraint relaxes; this is consistent with Theorem~\ref{thm:mismatch}, as the full morphing advantage is realized only when sufficient power is devoted to sensing. In panel~(b), the two velocity Pareto curves are visually indistinguishable, confirming that FIM shape optimization has \emph{zero} effect on velocity CRB regardless of the communication–sensing power split, validating Theorem~\ref{thm:ortho} in the optimization context. This decoupling enables independent subsystem design: FIM morphing for position, bistatic geometry for velocity, without cross-interference.

\subsubsection{DT Closed-Loop Tracking}
\vspace{-8pt}
\begin{figure}[ht]
\centering 
\captionsetup{font=footnotesize}
\includegraphics[width=\columnwidth]{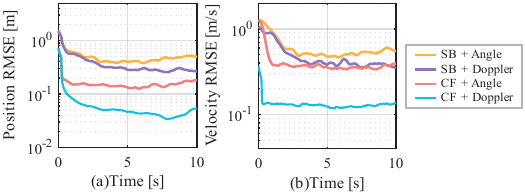}
\vspace{-16pt}
\caption{Closed-loop EKF tracking%: position (a) and velocity (b) RMSE vs.\ time.
}
\label{fig:dt}
\end{figure}
% Fig.~\ref{fig:dt} evaluates the EKF-based DT tracking loop of Section~\ref{sec:dt_loop} under four configurations that isolate the effects of cell-free deployment and Doppler fusion. All schemes converge from their noisy initialization to within $10\%$ of steady-state RMSE by $t \approx 2$~s. At steady state, CF+Doppler achieves the lowest position RMSE (${\sim}0.04$~m), followed by CF+Angle-only (${\sim}0.2$~m), SB+Doppler (${\sim}0.3$~m), and SB+Angle-only (${\sim}0.5$~m). The $4\times$ gap between CF+Doppler and CF+Angle-only shows that Doppler measurements tighten position estimates through the EKF cross-coupling between velocity and position states. The order-of-magnitude gap between CF and SB configurations at the same Doppler setting confirms that angular diversity is the dominant driver of position accuracy. Panel~(b) reveals a complementary hierarchy for velocity: the two Doppler-equipped schemes converge to ${\sim}0.13$~m/s regardless of whether the architecture is cell-free or single-BS, while the angle-only variants settle near $0.4$~m/s. This decoupling---velocity governed by Doppler availability, position by angular diversity---is consistent with the spatial-temporal orthogonality of Theorem~\ref{thm:ortho}: FIM shape and AP geometry affect position and velocity through independent subspaces of the joint FIM.
Fig.~\ref{fig:dt} evaluates the EKF tracking loop of Section~\ref{sec:dt_loop} under four configurations isolating the effects of cell-free deployment and Doppler fusion. All schemes converge within $2$~s to $10\%$ of steady-state RMSE. In panel~(a), steady-state position RMSE follows CF+Doppler $({\sim}0.04$~m) $<$ CF+Angle-only $({\sim}0.2$~m) $<$ SB+Doppler $({\sim}0.3$~m) $<$ SB+Angle-only $({\sim}0.5$~m). The $4\times$ gap between CF+Doppler and CF+Angle-only shows that Doppler tightens position through the EKF velocity-position cross-coupling; the order-of-magnitude gap between CF and SB at the same Doppler setting confirms that angular diversity is the dominant driver of position accuracy. Panel~(b) shows a complementary hierarchy for velocity: Doppler-equipped schemes converge to ${\sim}0.13$~m/s regardless of architecture, while angle-only variants settle near $0.4$~m/s. This decoupling, velocity governed by Doppler and position by angular diversity, is consistent with the spatial-temporal orthogonality of Theorem~\ref{thm:ortho}.

\subsection{Robust Design and DT Synchronization}
\label{sec:res_robust_sync}
This subsection addresses robust FIM design under DT uncertainty (Fig.~\ref{fig:robust}) and DT synchronization scheduling (Fig.~\ref{fig:dt_sync}), which follows the classical overhead-versus-performance tradeoff in adaptive transmission~\cite{Yu2025DTSync,zhang2011training}.
\begin{figure}[t]
\centering 
\captionsetup{font=footnotesize}
\includegraphics[width=\linewidth]{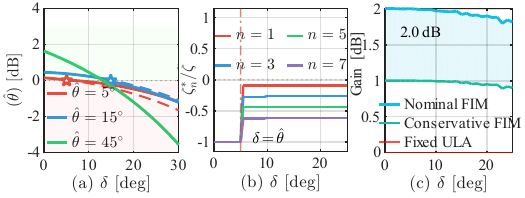}
\vspace{-18pt}
\caption{Robust FIM design under DT angle uncertainty $\delta$.}
\label{fig:robust}
\end{figure}
%Fig.~\ref{fig:robust} illustrates the robust minimax FIM design from three perspectives. \emph{Panel~(a), per-link gain}: the FIM gain $\tilde{\gamma}(\hat{\theta})$ relative to a fixed ULA crosses the baseline at $\delta \approx |\hat\theta|$ for all three tested angles $(5^\circ, 15^\circ, 45^\circ)$, matching the closed-form prediction $\Delta\theta_{\mathrm{crit}} = |\hat{\theta}|$ of Theorem~\ref{thm:mismatch}; beyond the crossing the gain turns negative, with the degradation weaker at small $\hat\theta$ because near broadside, element displacement produces negligible path-length difference.
Fig.~\ref{fig:robust} illustrates the robust minimax FIM design from three perspectives. \emph{Panel (a), per-link gain}: the FIM gain $\tilde{\gamma}(\hat{\theta})$ relative to a fixed ULA crosses the baseline at $\delta \approx |\hat{\theta}|$ for all tested angles $(5^\circ, 15^\circ, 45^\circ)$, consistent with the closed-form prediction $\Delta\theta_{\mathrm{crit}} = |\hat{\theta}|$ in Theorem~\ref{thm:mismatch}. Beyond this point, the gain becomes negative; the degradation is less pronounced for small $\hat{\theta}$, as near-broadside operation results in negligible path-length variation due to element displacement. The robust design (dashed) degrades more gracefully but at the cost of reduced peak gain, tracking the ULA baseline rather than dropping below it.
%The robust design (dashed) degrades more gracefully but sacrifices peak gain, tracking the ULA baseline rather than falling below it. 
\emph{Panel (b), optimal displacement}: for $\delta < \hat{\theta}$ (with $\hat{\theta} = 5^\circ$), all elements lie on the morphing boundary, i.e., $\zeta_n^{\mathrm{opt}} = \pm \tilde{\zeta}$, in order to maximize the effective aperture. At $\delta = \hat{\theta}$, the elements shift to interior positions, confirming the bang-bang structure of the minimax solution. \emph{Panel (c), system-level gain} ($M_t = M_r = 4$): the nominal design maintains an approximately $2.0$~dB gain over the ULA baseline across the uncertainty range (starting from $2.01$~dB at $\delta = 0$), while the conservative design remains at approximately $1.0$~dB (starting from $1.00$~dB at $\delta = 0$). Neither curve crosses the $0$~dB baseline, confirming that angular diversity from distributed APs prevents the per-link inversions observed in Panel (a) from propagating to the system level.
\subsubsection{DT Synchronization Rate 
(Proposition~\ref{prop:sync})}
\begin{figure}[ht]
\centering 
\captionsetup{font=footnotesize}
\includegraphics[width=\linewidth]{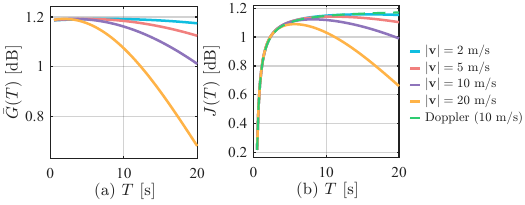}
\vspace{-16pt}
\caption{DT synchronization%: time-averaged FIM gain $\bar{G}(T)$ (a) and effective gain $J(T)$ (b) vs.\ update period $T$.
}
\label{fig:dt_sync}
\end{figure}
Fig.~\ref{fig:dt_sync} validates Proposition~\ref{prop:sync}. \emph{Panel~(a), time-averaged gain}: all four speed curves start from $\bar{G}(0^+) \approx 1.2$~dB but diverge as $T$ grows; at $|\mathbf{v}|=20$~m/s, angular drift $\omega_m = \sigma_v^{(\mathrm{mode})}/r_m$ accumulates rapidly and $\bar{G}$ drops below $1$~dB by $T \approx 12.5$~s, whereas at $2$~m/s the gain stays above $1.1$~dB even at $T=20$~s. \emph{Panel~(b), effective gain}: the sensing-slot overhead $c_s$ penalizes short update periods via $(1-c_s/T)$, yielding a concave $J(T)$ with a well-defined peak for each speed. Under angle-only tracking, $T^\star_{\mathrm{angle}}$ shifts from $5.6$~s ($|\mathbf{v}|=20$~m/s) through $8.2$~s ($10$~m/s) and $10.9$~s ($5$~m/s) to $15.5$~s ($2$~m/s), confirming the $T^\star \propto \sigma_v^{-1/2}$ scaling of~\eqref{eq:T_star}. Doppler fusion extends $T^\star_{\mathrm{Dopp}}$ to ${\approx}20$~s across all speeds, reducing the velocity residual from $\sigma_v^{(\mathrm{angle})} \approx 0.6$ to $1.2$~m/s down to $\sigma_v^{(\mathrm{Dopp})} \approx 0.15$~m/s. The resulting relaxation factor grows from $1.3\times$ at $|\mathbf{v}|=2$~m/s to $3.6\times$ at $|\mathbf{v}|=20$~m/s, a direct benefit of Doppler fusion on the DT update budget.

\subsection{Comparison with Existing DT-ISAC Methods}
\label{sec:res_benchmark}

To quantify the system-level benefits, we compare against a baseline inspired by~\cite{Ding2024DTISAC}, which employs a co-located BS with fixed ULA, angle-only EKF tracking, and fixed-period DT updates. For fairness, the baseline uses the same total antenna count ($N = 32$) and the same constant-velocity EKF. We progressively enable each proposed module in five configurations: (i)~\emph{Baseline~\cite{Ding2024DTISAC}}: single-BS, fixed ULA, angle-only, fixed update period; (ii)~\emph{+Cell-free}: distributed APs ($M_t=M_r=4$), fixed ULA, angle-only; (iii)~\emph{+FIM}: cell-free with FIM morphing, angle-only; (iv)~\emph{+Doppler}: cell-free, FIM, angle + bistatic Doppler; (v)~\emph{Proposed}: full framework with adaptive DT scheduling (Algorithm~\ref{alg:policy}).

\begin{figure}[ht]
\centering 
\captionsetup{font=footnotesize}
\includegraphics[width=\linewidth]{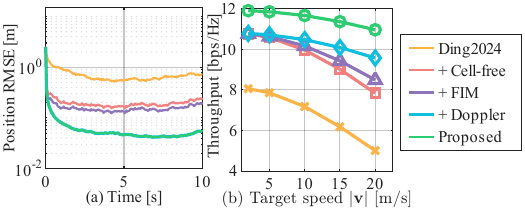}
\vspace{-16pt}
\caption{Progressive ablation benchmark%: position RMSE vs.\ time (a) and effective throughput vs.\ target speed (b).
}
\label{fig:benchmark}
\end{figure}

\subsubsection{Tracking Accuracy}
Fig.~\ref{fig:benchmark}(a) shows position RMSE over a $10$~s tracking window. The baseline~\cite{Ding2024DTISAC} converges to $0.69$~m, limited by its single observation geometry and lack of Doppler. Cell-free deployment reduces this to $0.21$~m ($-10.2$~dB), confirming that angular diversity is the single largest contributor to position accuracy. Adding FIM morphing gives a further $1.9$~dB reduction to ${\sim}0.17$~m, a modest improvement consistent with Theorem~\ref{thm:ortho}, since FIM affects only the position sub-CRB while velocity (which drives EKF steady-state position via cross-coupling) is unaffected. Doppler fusion reduces the error to ${\sim}0.05$~m, contributing 10.9 dB 
to the total tracking gain, validating the asymmetric coupling predicted by Proposition~\ref{prop:asym}: Doppler improves velocity estimation, which in turn tightens position via the EKF cross-covariance. The total gap from baseline is $23$~dB. Configurations (iv) and (v) produce identical tracking curves because Algorithm~\ref{alg:policy} does not reduce the sensing rate; instead, it optimizes \emph{when} to refresh the DT, reducing communication overhead without sacrificing tracking fidelity.%Configurations~(iv) and~(v) produce identical tracking curves because Algorithm~\ref{alg:policy} does not reduce the sensing rate; it optimizes \emph{when} to refresh the DT, saving communication overhead without sacrificing tracking fidelity.

\subsubsection{Effective Throughput}
Effective throughput is defined as $\eta_{\mathrm{eff}} = (1 - c_s/T)\,R_{\mathrm{sum}}\,\alpha(|\mathbf{v}|)$, with $c_s = 0.1$~s the per-slot sensing overhead, $T$ the update period ($T=1$~s fixed, $T=T^\star$ adaptive), and $\alpha$ the beam-alignment loss from DT staleness. At $|\mathbf{v}|=20$~m/s, Fig.~\ref{fig:benchmark}(b) shows that the baseline delivers $5.0$~bps/Hz, limited by angle-only EKF and fixed scheduling. Cell-free deployment provides the largest single throughput gain ($+2.8$~bps/Hz, to $7.8$~bps/Hz) via macro-diversity; FIM morphing adds $+0.7$~bps/Hz; Doppler fusion, $+1.1$~bps/Hz, by maintaining beam alignment at high speed; adaptive scheduling adds $+1.4$~bps/Hz by reducing $c_s/T^\star$ below that of the fixed schedule, exploiting the EKF's low steady-state covariance to allow $T^\star \gg 1$~s. The proposed framework reaches $11.0$~bps/Hz, a $119\%$ gain over baseline.

Accuracy and throughput follow \emph{distinct rankings}. Tracking is dominated by Doppler fusion ($-10.9$~dB), then cell-free diversity ($-10.2$~dB) and FIM morphing ($-1.9$~dB); throughput is dominated by cell-free deployment ($+2.8$~bps/Hz), then adaptive scheduling ($+1.4$~bps/Hz), Doppler fusion ($+1.1$~bps/Hz), and FIM morphing ($+0.7$~bps/Hz). This asymmetry matches the theoretical predictions: Theorem~\ref{thm:rank} and Proposition~\ref{prop:asym} identify Doppler as the tracking enabler, while Proposition~\ref{prop:sync} predicts that adaptive scheduling's payoff is primarily in communication overhead.

%% file: 08_conclusion.tex
\section{Conclusion}
\label{sec:conclusion}

This paper developed a joint position--velocity CRB framework for FIM-augmented cell-free ISAC and a closed-loop tracking architecture that translates these bounds into practical design rules. Four main results were obtained.
First, a scatter decomposition of the velocity FIM reveals a rank transition at two APs: a single BS estimates only radial velocity, while two APs enable full 2D observability with RMSE scaling as $M^{-1}$.
Second, a spatial-temporal orthogonality principle shows that FIM shaping affects position CRB but not velocity CRB under isotropic waveforms, allowing decoupled design.
Third, mismatch analysis shows that performance degrades beyond a critical mismatch equal to the nominal pointing angle while cell-free diversity preserves system-level gains.
Fourth, the proposed closed-loop tracking with confidence-aware scheduling and optimal synchronization enables adaptive FIM control at each AP. Simulation demonstrates a 23~dB tracking gain over a co-located baseline, with additional improvements in synchronization and throughput.

Future work includes multi-target extension, near-field sub-THz propagation, tighter non-local bounds such as the Barankin bound~\cite{huang2026barankin} for threshold SNR prediction, and experimental validation on a cell-free testbed.

%% file: Appendix.tex
\appendices

\section{Derivation of the Robust FIM SDP}
\label{app:sdp}

By Theorem~\ref{thm:ortho}, the robust FIM shape optimization decouples across APs in the sense that the velocity CRB is shape-independent; the objective~\eqref{eq:robust_obj} thus reduces to minimizing the worst-case position CRB. Substituting the Taylor expansion~\eqref{eq:taylor_crb} into the epigraph reformulation of~\eqref{eq:robust_P0}, the per-AP worst-case constraint reads
\begin{equation}
h_{0,m} + h_{1,m}\,\Delta\theta_m + h_{2,m}\,\Delta\theta_m^2 \le t_m,\quad \forall\,\Delta\theta_m^2 \le \delta_m^2,
\label{eq:app_semi_inf}
\end{equation}
a semi-infinite quadratic constraint with a scalar quadratic uncertainty set. Applying the S-procedure~\cite[App.~B.2]{boyd2004convex} (Slater feasible since $\Delta\theta_m = 0$ is an interior point), constraint~\eqref{eq:app_semi_inf} is equivalent to the existence of $\tau_m \ge 0$ such that
\begin{equation}
(h_{2,m} - \tau_m)\Delta\theta_m^2 + h_{1,m}\Delta\theta_m + (h_{0,m} + \tau_m\delta_m^2 - t_m) \le 0,\;\forall\,\Delta\theta_m.
\label{eq:app_scalar_psd}
\end{equation}
A scalar quadratic $\alpha x^2 + \beta x + \gamma \le 0$ holds for all $x\in\mathbb{R}$ iff $\alpha \le 0$ and $\beta^2 \le 4\alpha\gamma$, which in matrix form is equivalent to $\bigl[\begin{smallmatrix}-\gamma & -\beta/2\\ -\beta/2 & -\alpha\end{smallmatrix}\bigr] \succeq \mathbf{0}$. Substituting the coefficients from~\eqref{eq:app_scalar_psd} yields the LMI~\eqref{eq:robust_lmi}, and combined with the epigraph objective gives the SDP~\eqref{eq:robust_sdp}. Since the S-procedure is \emph{exact} for scalar quadratic constraints, the SDP is an equivalent reformulation, not a relaxation, of~\eqref{eq:robust_P0} under the Taylor approximation~\eqref{eq:taylor_crb}. The resulting problem has $M_t + M_r$ size-$2$ LMIs and is solvable by standard interior-point methods with per-iteration complexity $\mathcal{O}((MN)^{3.5})$.\hfill$\blacksquare$

\section{Velocity FIM Derivation via Slepian--Bangs}
\label{app:vel_fim}

% Applying the Slepian--Bangs formula for deterministic signals in complex Gaussian noise~\cite{kay1993estimation} to the Doppler signal model~\eqref{eq:doppler_signal}, the velocity sub-block of the per-pair Fisher information is
Applying the Slepian--Bangs formula~\cite{kay1993estimation} for complex 
Gaussian observations to the Doppler model~\eqref{eq:doppler_signal}, 
the velocity sub-block of the per-pair Fisher information is
\begin{equation}
[\mathbf{I}_{\mathbf{vv}}^{(m_t,m_r)}]_{ij}
= \frac{2}{\sigma^2}\,\Real\bigg\{\sum_{l=0}^{L-1}
\frac{\partial\bm{\mu}_l^H}{\partial v_i}\frac{\partial\bm{\mu}_l}{\partial v_j}\bigg\}.
\label{eq:app_sb}
\end{equation}
Since, $\mathbf{v}$, enters $\bm{\mu}_l = g_{m_t,m_r}\, e^{j2\pi f_d l T_s}\,\mathbf{a}_{m_r}$ only through $f_d = \lambda^{-1}\mathbf{b}_{m_t,m_r}^\top\mathbf{v}$, where $g_{m_t,m_r} \triangleq \alpha_{m_t,m_r}\mathbf{a}_{m_t}^H\mathbf{w}_{m_t}$, the chain rule yields
\begin{equation}
\frac{\partial\bm{\mu}_l}{\partial v_i}
= \frac{j2\pi l T_s}{\lambda}\,[\mathbf{b}_{m_t,m_r}]_i\,g_{m_t,m_r}\,e^{j2\pi f_d l T_s}\,\mathbf{a}_{m_r}.
\label{eq:app_deriv}
\end{equation}
Substituting~\eqref{eq:app_deriv} into~\eqref{eq:app_sb}, the Doppler phase exponentials cancel and $\|\mathbf{a}_{m_r}\|^2 = 1$ by array normalization, giving a result independent of $f_d$ and of the Rx FIM displacement, $\bm{\xi}_{m_r}$. The Tx-side displacement $\bm{\zeta}_{m_t}$ is absorbed into $|g_{m_t,m_r}|^2$ via the beamforming gain $|\mathbf{a}_{m_t}^H\mathbf{w}_{m_t}|^2$, which is shape-independent under matched-filter or isotropic beamforming. These cancellations are the structural basis of Theorem~\ref{thm:ortho}. Evaluating $\sum_{l=0}^{L-1}l^2 = L(L-1)(2L-1)/6 \approx L^3/3$ for $L \gg 1$ and summing the resulting rank-1 outer products $\lambda^{-2}\eta^2_{m_t,m_r}\mathbf{b}_{m_t,m_r}\mathbf{b}_{m_t,m_r}^\top$ over all $M_tM_r$ bistatic pairs yields~\eqref{eq:vel_fim_total} with $\eta^2_{m_t,m_r}$ as defined therein.\hfill$\blacksquare$